\theoremstyle{plain}
\newtheorem{lemma}{Lemma}
\newtheorem{corollary}{Corollary}
\theoremstyle{remark}
\newtheorem{remark}{Remark}
\theoremstyle{assumption}
\newtheorem{assumption}{Assumption}
\newcommand{\iden}{\ensuremath{\mathbb{I}}}  %identity func
\newcommand{\real}{\ensuremath{\mathbb{R}}} %real numbers
\newcommand{\expt}{\ensuremath{\mathbb{E}}}
\DeclareMathOperator*{\var}{Var}
\newcommand{\X}{\ensuremath{\mathbb{X}}}
\newcommand{\Z}{\ensuremath{\mathfrak{Z}}}
\newcommand{\pto}{\overset{p}{\to}}
\DeclareMathOperator*{\argmin}{arg\,\!min}
\algnewcommand{\IIf}[1]{\State\algorithmicif\ #1\ \algorithmicthen}
\algnewcommand{\ElseIIf}{\unskip\ \algorithmicelse}
\algnewcommand{\EndIIf}{\unskip\ \algorithmicend\ \algorithmicif}
\titlespacing\section{0pt}{4pt plus 2pt minus 2pt}{0pt plus 2pt minus 0pt}
\titlespacing\subsection{0pt}{4pt plus 2pt minus 2pt}{0pt plus 2pt minus 0pt}
\titlespacing\subsubsection{0pt}{4pt plus 2pt minus 2pt}{0pt plus 2pt minus 0pt}
\title{Inference Trees: Adaptive Inference with Exploration}
\author{
    Tom Rainforth$^1$~~ Yuan Zhou$^1$~~ Xiaoyu Lu$^1$~~ Yee Whye Teh$^1$~~ Frank Wood$^2$ \\
    \textbf{Hongseok Yang}$^3$~~ \textbf{Jan-Willem van de Meent}$^4$ \\
    $^1$University of Oxford; ~
    $^2$University of British Columbia; ~
    $^3$KAIST; ~
    $^4$Northeastern University \\
    \small{\texttt{\{rainforth, xiaoyu.lu,  y.w.teh\}@stats.ox.ac.uk, yuan.zhou@cs.ox.ac.uk,}} \\
    \small{\texttt{fwood@cs.ubc.ca, hongseok.yang@kaist.ac.kr, j.vandemeent@northeastern.edu}}
    \vspace{-15pt}
}
\begin{document}
	% \nipsfinalcopy is no longer used

\maketitle
	
\setlength{\abovedisplayskip}{2.5pt}
\setlength{\belowdisplayskip}{2.5pt}
\setlength{\abovedisplayshortskip}{2.5pt}
\setlength{\belowdisplayshortskip}{2.5pt}	
	
%	\vspace{-8pt}
\begin{abstract}
	\vspace{-5pt}
	% !TEX root =  it_nips2018.tex

We introduce inference trees (ITs), a new class of inference methods
that build on ideas from Monte Carlo tree search to perform adaptive sampling in a manner that balances exploration with exploitation, ensures consistency, 
and alleviates pathologies in existing adaptive methods.
%ensures consistency,
%and alleviates common pathologies in existing adaptation strategies.
%Existing adaptative importance sampling methods typically
%maximize the correspondence between the target density and a proposal density,
%which implicitly corresponds to a pure exploitation strategy in which previous 
%samples are assumed to have been generated from the true posterior. 
ITs adaptively sample from hierarchical partitions of the parameter space,
while simultaneously learning these partitions in an online manner. 
This enables ITs to not only identify regions of high posterior mass, but also 
maintain uncertainty estimates to track regions where significant posterior 
mass may have been missed. ITs can be based on any inference method 
that provides a consistent estimate of the marginal likelihood. They are 
particularly effective when combined with sequential Monte Carlo, where they  capture long-range dependencies and yield improvements beyond proposal adaptation alone.

\end{abstract}
%	\vspace{-16pt}
	
\section{Introduction}
\label{sec:introduction}

% !TEX root =  it_nips2018.tex

The choice of proposal distribution is a key factor in the performance of Monte Carlo (MC) methods.
Unfortunately, it is typically difficult to know what constitutes a good 
proposal prior to performing inference. 
For this reason, many methods use past samples to adapt the proposal at future iterations \cite{cappe2004population,cornebise2008adaptive,
cornuet2012adaptive,gu2015neural,liang2011advanced}, for example by minimizing the KL divergence between the empirical distribution over samples and the proposal. These strategies implicitly assume that preceding samples are representative of the true posterior. This leads to the somewhat undesirable characteristic that we already need good samples to have effective adaptation, which is presumably difficult to achieve given our need to adapt in the first place. Adaptive methods can 
consequently exhibit pathologies, such as collapsing to a single mode or even adapting to invalid proposals~\cite{andrieu2008tutorial,cappe2008adaptive}.

% However, these methods adapt in a way that is exclusively exploitative -- at
% each iteration they use the best estimate of the proposal so far.  This
% leads to a number of pathologies and can even undermine convergence of the
% algorithm.  For example, adaptive methods are typically prone to collapsing
% to a single mode of the posterior~\cite{cappe2008adaptive}.

To address these issues, we propose that adaptive methods should not only
carry out \emph{exploitation}, that is sample in regions where we believe the
posterior mass is high, but also \emph{exploration}, that is explicitly invest computational resources to
sample in regions where our current uncertainty about the posterior mass is
high. In other words, we should recognize that the utility derived from the generated samples 
originates not only from their direct contribution to the
estimator, but also the degree to which they inform future sampling.

%adaptation should allocate computation to 
%balance \emph{exploitation}, minimizing the variance in the estimator based on our 
%current knowledge of the posterior, with \emph{exploration} in regions where our current 
%uncertainty about the posterior mass is high. 
%To address these isssues, we suggest that adaptive methods should explicitly perform exploration -- 

To this end, we introduce
inference trees (ITs), a new class of adaptive methods that build on ideas from Monte Carlo tree search
(MCTS) \cite{browne2012survey,kocsis2006bandit}. ITs hierarchically partition the parameter space into disjoint regions in an online manner, resulting in more fine-grained partitions for regions where the posterior density is large. This transforms the problem of inference on the full parameter space to a set of constrained inference problems, which we can
combine in a manner akin to stratified sampling~\cite{carpentier2015adaptive,neufeld2014adaptive}. 
By adaptively choosing regions in which to refine our estimates, we can
explicitly control the exploration-exploitation trade-off.
%We can use adaptive strategies for stratified sampling~\cite{carpentier2015adaptive} to combine constrained estimates into an estimate for the full space, incorporating techniques from the bandit literature~\cite{lai1985asymptotically,neufeld2014adaptive} to explicitly control the exploration-exploitation trade-off.
This results in an algorithm that can expend computational resources to investigate
whether the proposal can be improved, for example by searching for missing modes, 
rather than just greedily exploiting the best proposal learned so far. 
%This has a number of advantages over adapting a single proposal. Firstly, it
%breaks the constraint of independent sampling imposed by, for example,
%importance sampling; allowing explicit allocation of computational resources to exploration -- i.e.
%refining our estimate in regions with low estimated posterior mass but high
%uncertainty.
%Further, when computing the expectation of a known function, ITs can explicitly target
%the variance of the final estimate.
%, rather than just improving the representation of the posterior.

%This 
%In essence, they are simultaneously learning a good partitioning structure, establishing the 
%relative frequency to sample each region, 
%and performing inference given these.  

ITs can be thought of as a meta-algorithm that controls the allocation of computational resources of a base inference algorithm.
We show that, under mild assumptions, ITs define a consistent estimator whenever the base algorithm itself provides a consistent estimator. This property is independent of the methods for learning the partitioning and allocation of
computational resources between the partitions.
In addition to the theoretical guarantees that this provides, the resulting flexibility
proves critical to the empirical performance of ITs.
For example, we exploit this flexibility to introduce a novel allocation scheme that uses \emph{targeted} exploration: rather than
just using an optimism boost~\cite{auer2002finite} to ensure a minimum level of allocation for all regions, it uses
explicit uncertainty estimates for the true marginal posterior mass of a region to identify important areas to explore, such as those
likely to contain a missing mode.  Underlying this approach is a novel estimator in its own right.  Namely, we
perform density estimation on sample weights to predict the probability the true marginal posterior mass of a region
is above a certain threshold.  Remarkably, this estimator remains robust even when the MC estimate of the marginal
is thousands of orders of magnitude smaller than the true value.
%based on an novel uncertainty metric for the true posterior mass of a region which remains reliable even
%when the current MC estimate is hundreds (or even thousands) of orders of magnitude times smaller than the
%true value.
%This leads, in turn, lead to substantial performance improvements in the overall algorithm performance compared to both
%na\"{i}ve targeted strategies based on conventional MC uncertainty estimates and standard exploration
%strategies from the bandits literature like optimism boosting~\cite{auer2002finite}.

We find that the gains that ITs provide are particularly 
pronounced when they are combined with sequential Monte Carlo (SMC)~\cite{doucet2001introduction}, where they offer a means of capturing long-range dependencies. This yields improvements beyond what can be achieved by the so-called one-step optimal proposal.

%We show that ITs are consistent regardless of the learning algorithm
%if a small number of weak assumptions are satisfied.  In addition to the theoretical
%advantages this provides, the resulting flexibility is critical to the empirical performance of ITs
%as it enables the use of inventive uncertainty metrics that can substantially increase the effectiveness
%of the exploration.  For example, we introduce a novel method based on density estimation of the logarithm of the sample weights
%that provides realistic estimates for the probability of a region containing a previously uncovered mode, even
%when the current ML estimate is hundreds or even thousands of orders of magnitude times
%smaller than the true value.  This in turn leads to substantial improvements over na\"{i}ve strategies based
%on conventional MC estimates.

\section{Background and Related Work}
\label{sec:background}
% !TEX root =  it_nips2018.tex

%ITs are a new class of meta algorithms for Bayesian inference or integration.  
%They require a base algorithm which is adaptively applied
%to different regions of the target space. % In the interest of exposition, 
Our aim is to approximate a target density $\pi(x) = \gamma(x)/\omega$, for which it is possible to evaluate the unnormalized density $\gamma(x)$ pointwise, but computation of the normalization constant $\omega$ is intractable.
We will assume that we have a base MC
algorithm that returns weighted
samples and makes use of some form of proposal distribution $q(x)$.  Though we will later consider
other approaches (see \S\ref{sec:chaos}), for exposition, it will be easiest to think of this base
algorithm as being self-normalized importance 
sampling~\citep{owen2013mc}, which defines an estimated measure based on weighted samples from $q$,
\begin{align}
\hat{\pi} (\cdot) := \sum_{n=1}^{N} \bar{w}_n \delta_{\hat{x}^n}(\cdot) \quad
\text{where} \quad \hat{x}^n \sim q(x), \quad
 \bar{w}_n &:= \frac{w_n}{\sum_{n=1}^{N} w_n},
\quad w_n := \frac{\gamma(\hat{x}^n)}{q(\hat{x}^n)}.
\end{align}
%We now discuss some of the key underlying themes motivating ITs.

\subsection{Adaptive Monte Carlo Inference}
\label{sec:adapt_MC}

%The performance of MC inference methods is almost universally critically
%dependent on the proposal $q$.  Consequently many methods look to adapt 
%$q$, using information from previous samples to improve the performance
%at future iterations.  
Though there are a range of approaches for adapting the proposal $q$ 
(see \citet{bugallo2017adaptive} for a review),
most share a common framework of alternating between sampling using the
current proposal and adapting the proposal using previous samples,
with the latter often taking the form of a (potentially implicit) density estimation.
For example, one common approach is to, at each
iteration, choose the proposal that minimizes the KL divergence from the
estimated posterior to the proposal~\citep{cappe2008adaptive,douc2007convergence}.
Namely, if $\theta$ denotes the parameters of $q$, one uses 
$\theta^* = \argmin_{\theta} \sum_{n=1}^N -\bar{w}_n \log q_{\theta} (\hat{x}^n)$
%\begin{align}
%\theta^* & := \argmin_{\theta} \int \hat{\pi} (x) 
%\log \left(\frac{\hat{\pi} (x) }{q_{\theta} (x)}\right) d x
%= \argmax_{\theta} \sum_{n=1}^N \bar{w}_n \log q_{\theta} (\hat{x}^n).
%\label{eq:KL-emp}
%\end{align}
at each iteration.
This leads to an expectation maximization style approach
%where one typically chooses the form of $q_{\theta}$ to make~\eqref{eq:KL-emp} analytically tractable. 
that is \emph{greedy}, in the sense that past samples are assumed to accurately represent the posterior.

\subsection{Multi-Armed Bandits and Monte Carlo Tree Search} 
\label{sec:bandits}

In multi-armed bandit problems, an agent sequentially chooses between multiple actions,
known as arms, each of which returns a stochastic reward.  The agent's goal is to 
%use information gathered from previous steps to maxim
%A multi-armed bandit problem is a sequential decision task where an agent must
%choose between multiple actions, known as arms, at each time step using the information gathered
%from previous steps.  Each arm returns a stochastic reward and the agent's goal is to 
maximize the long-term cumulative reward~\citep{agrawal2012analysis,berry1985bandit}.  
One common strategy is upper confidence bounding (UCB)~\citep{auer2002finite}, which chooses the arm $j$ that maximizes the utility 
\begin{align}
\label{eq:ucb}
u_j = \hat{r}_j + (\beta/\sqrt{M_j})\log \textstyle\sum\nolimits_i M_i.
\end{align}
In this definition, $\hat{r}_j \in [0,1]$ is the current estimate of the expected reward for each arm, $M_j$ is the number of times arm $j$ was previously pulled, and $\beta$ is a parameter
that controls the level of exploration. Here $\hat{r}_j$ is an \emph{exploitation} term that ensures we pull arms with high expected reward more frequently,
while $(\beta/\sqrt{M_j})\log \textstyle\sum\nolimits_i M_i$ is an \emph{exploration} term,
sometimes known as an optimism boost, which encourages us to pull arms which have been pulled infrequently so far.

Of particular relevance to our work is the study of bandits in the stratified sampling 
setting~\cite{carpentier2015adaptive,etore2010adaptive,etore2011adaptive,kawai2010asymptotically,
	neufeld2016adaptive}.
Here one splits a target integral into a number of strata, then looks to minimize the overall error by
allocating samples to the MC estimators associated with each strata.
%among these using bandit strategies.  The
%focus is typically on the MC integration setting, where 
The optimal strategy can
be shown to sample each strata in proportion to the standard deviation of its evaluations~\citep{carpentier2015adaptive}.
Because one now needs to asymptotically sample from each arm infinitely often,
% rather than just identify the best arm,
  the strategy is adjusted to
\begin{align}
\label{eq:ucb-str}
u_j = \left(\hat{r}_j + (\beta/\sqrt{M_j})\log \textstyle\sum\nolimits_i M_i\right)/M_j
%\frac{1}{M_j}\left(\hat{r}_j + \beta \frac{\log \sum_i M_i}{\sqrt{M_j}}\right)
\end{align}
where $\hat{r}_j$ is typically set to the empirical standard deviation.  ITs differ from these
approaches in that they use hierarchical stratification, learn this stratification
in an online manner, use a different
utility that incorporates a targeted exploration term, and 
adapt UCB to the inference setting.

%\citet{auer2002finite} showed that the \emph{cumulative regret}
%of~\eqref{eq:ucb}, that is the shortfall in the cumulative reward compared to an optimal oracle, 
%grows at an optimal~\citep{lai1985asymptotically} logarithmic rate with the number of iterations, while
%similar results have been shown for~\eqref{eq:ucb-str}~\citep{carpentier2015adaptive}.  However, these analyses require
%strong assumptions on the reward distribution $p(\hat{r}_j)$, namely that it does not have heavier-than-Gaussian
%tails; see e.g.~\cite{bubeck2012regret} for more details.
%
%\subsection{Monte Carlo Tree Search}
%\label{sec:MCTS}

MCTS \cite{browne2012survey,kocsis2006bandit} uses a
hierarchy of arms where one traverses the tree by sequentially choosing child nodes using~\eqref{eq:ucb} (or a variation
thereof)
until a leaf is reached, then refines that node and propagates the new
estimates up through the tree.  The average reward $\hat{r}_j$ of a non-leaf node
is the average of those of its children, while $M_j$ becomes the number of times a node has been traversed.
%The leaf is then \emph{refined}, by either directly evaluating
%it or splitting it to form new child nodes and evaluating those, and the updated estimates \emph{propagated}
%up through the tree in order to improve future traversal strategies, with the average reward of a non-leaf node
%comprising of the average of its children.  
MCTS has traditionally been used 
for planning~\citep{silver2016mastering} and in discrete decision settings.  
%While there has previously
%been some limited application of MCTS in continuous settings~\citep{van2011automatic} and for MAP estimation~\citep{tolpin2015maximum}, 
We believe that our work is the first to consider MCTS in the context of inference or integration, as opposed to optimization.  
ITs also vary from the standard MCTS setting in how rewards are calculated and propagated.
%; rewards can be evaluated at any node in the tree and do not require a full roll out.  
%Thus our reward estimate at any node is a combination of a local
%estimate and estimates from its children.

\section{Algorithm Overview}
\label{sec:overview}

% !TEX root =  it_nips2018.tex

ITs hierarchically partition the target space, run inference separately on the
resulting disjoint regions to obtain local estimates, and then combine these local
estimates into one overall estimate.  Each node in the tree corresponds to a region of 
target space, $A_j$, such that the region of a parent node is the union of its children, 
$A_j = A_{\ell_j} \cup A_{r_j}$ where $\ell_j$ and $r_j$ are the child indices, and the union of all leaf nodes is the full space. 
We assume that we are able to sample from the proposal restricted to a node, 
$q(x | x \in A_j)$, and evaluate this renormalized truncated density pointwise.
%we are able to sample from and evaluate 
%$q\left(x \mid A_j\right):=q\left(x \mid x \in A_j\right)$, that is the renormalized proposal
%truncated to the regions of the target space associated with each node.  
How this is achieved is discussed in \S\ref{sec:projection}.
 The IT learning process can be broken down into
three components as follows.

\begin{wrapfigure}{r}{0.29\textwidth}
	\vspace{-10pt}
\includegraphics[width=0.29\textwidth]{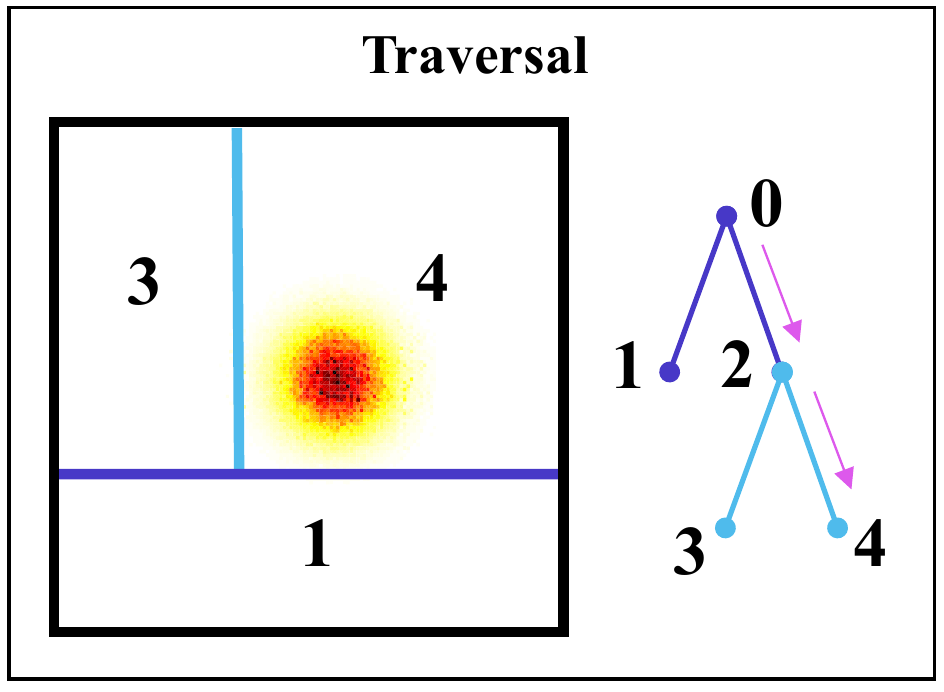}	

\vspace{5pt}
	\includegraphics[width=0.29\textwidth]{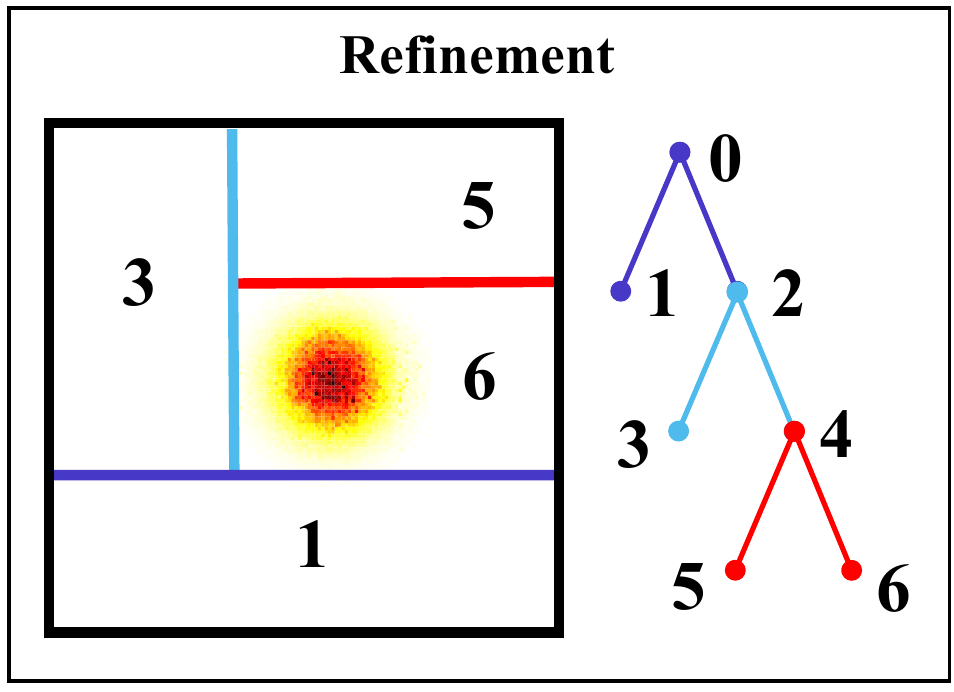}	
\vspace{-15pt}
\end{wrapfigure}
\textbf{Traversal}:
The traversal step 
adaptively allocates computational resources to areas of
the target space, balancing exploration and exploitation to
minimize the error of our final overall estimate.  
Following similar lines to MCTS, it
starts at the root node and then recursively choosing a child node until a leaf is reached.
To choose between children, we use the stratified sampling UCB formulation given
in~\eqref{eq:ucb-str}. % (note the distinction to conventional UCTs which use~\eqref{eq:ucb}).
%The key to our success will be choosing a target $\hat{r}_j$ that reflects the needs of
%our adaptive inference; 
However, as we explain in detail \S\ref{sec:traversal}, our $\hat{r}_j$ will vary from standard settings:
our reward must be adapted to reflect the fact we are doing inference
and rather than relying solely on the optimism boost for exploration, we will 
incorporate a targeted exploration term.

%\begin{wrapfigure}{r}{0.33\textwidth}
%	\vspace{-15pt}
%	\includegraphics[height=0.24\textwidth]{refinement.pdf}	
%	\vspace{-15pt}
%\end{wrapfigure}
\textbf{Refinement}:
In the refinement step we improve the estimate at the chosen node, 
either by running inference directly and 
updating the local estimate, or expanding the tree by splitting the 
node and running inference at each of the generated child nodes.
For both cases, the inference itself is performed using the base
algorithm and the truncated proposal $q\left(x \mid x \in A_j\right)$. 
%which ensures the generated samples are restricted to the target region.
The two considerations for refinement are whether to split and how to split. They are
discussed in \S\ref{sec:refinement}.

\begin{wrapfigure}{r}{0.118\textwidth}
	\vspace{-20 pt}
	\includegraphics[width=0.12\textwidth]{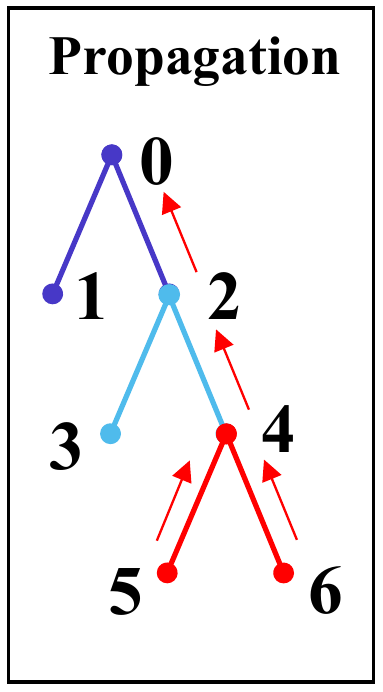}
	\vspace{-18pt}	
\end{wrapfigure}
\textbf{Propagation}:
In the propagation step, we recursively update the tree with the new estimates produced by
the refinement step, starting with the refined node(s) and then updating all their ancestors.
This improves our posterior representation and guides the future traversal strategy.
Along with a small number of additional terms required for the traversal, 
two key quantities are propagated up through the tree: a marginal likelihood estimate
$\hat{\omega}_j$ and an unnormalized empirical measure $\hat{\gamma}_j(\cdot)$.
The truncated posterior approximation at any node in the tree is then given by the 
self-normalized estimated measure $\hat{\pi}_j(\cdot) = \hat{\gamma}_j(\cdot) / \omega_j$,
with the root note estimate $\hat{\pi}_0(\cdot)$ representing our overall approximation.
The specifics of the propagation are discussed in \S\ref{sec:ITest}.

Putting these components together leads to an adaptive online inference algorithm
as summarized in Algorithm~\ref{alg:IT}.  We now discuss the individual elements of
ITs in more detail.  We note that, while the method for propagation is tightly coupled
with the IT estimator itself, the consistency of this estimator is independent of the
traversal and refinement strategies.  Consequently, a wide range of possible approaches fall under the general IT framework we have just introduced.

%
%As shown in Figure~\ref{fig:alg}, the IT learning process can be
%broken down into three components: traversing the tree to allocate computational resources, 
%refining the tree by splitting a leaf node and running inference,
%and propagating the information gathered from new samples up the tree
%to improve the posterior representation and future traversal strategy.

%\begin{figure*}[t]
%	\centering
%	\includegraphics[width=0.8\textwidth]{treelearning_figure_paper.pdf}%
%	\caption{Characterization of the components of tree learning algorithm.  See main
%		text for explanation.
%		Each node in the tree corresponds to an region of target space such that the region
%		of a parent node is the union of its children and the union of all leaf nodes is the full
%		space. In the \emph{traversal}
%		step (left),  we adaptively allocate computational resources,
%		starting at the root note and recursively choosing the left or right node until we reach a leaf.
%		At a high level, we want to allocate resources to areas where the posterior density
%		or our uncertainty is high.  In the \emph{refinement} step (middle)
%		we improve the estimate at this node, either by running inference directly and updating the local
%		estimate, or splitting the node to expand the tree and running inference at each of the generated
%		child nodes.
%		In the \emph{propagation} step (right) we recursively update tree with the new estimates to 
%		improve the posterior representation and future traversal strategy.
%		\label{fig:alg}}
%\end{figure*}
\setlength{\textfloatsep}{5pt}
\begin{algorithm}[t]
	\small
	\caption{Inference Tree Training \label{alg:IT}}
	\setstretch{1.1}
	\begin{algorithmic}[1]
		\renewcommand{\algorithmicrequire}{\textbf{Inputs:}}
		\renewcommand{\algorithmicensure}{\textbf{Outputs:}}	
		\Require Unnormalized target density $\gamma(x)$, ``truncatable'' proposal $q(x)$, base inference
		algorithm $\mathcal{F}$, complete target space $A_0$, number of iterations to run $R$, batch size $b$, existing tree $\mathcal{T}$ (optional)
		\Ensure Inference Tree $\mathcal{T}$,
		%$\Theta = \left\{\theta_j \right\}_{j\in\mathcal{J}}$ where each $\theta_j$ represents a node, 
		posterior empirical measure $\hat{\pi}_0(\cdot)$
		\State If required, initialize tree $\mathcal{T}$ by running inference on full space $\left\{\hat{x}_{0}^n,w_0^n\right\}_{n=1}^b \leftarrow \mathcal{F}\left(\gamma(x),q\left(x | A_0\right),b\right)$
%		\State Run inference on full space $\left\{\hat{x}_{0}^n,w_0^n\right\}_{n=1}^b \leftarrow \mathcal{F}\left(\gamma(x),q\left(x | A_0\right),b\right)$
%		\State Initialize root note $\theta_0$ using generated samples
		\For{$r = 1:R$} 
			\State Traverse tree by recursively selecting the child with highest
			$u_j$ (see~\eqref{eq:uct_it}) until a leaf $(j)$ is reached
			\If{decide to split node $j$} \Comment See \S\ref{sec:refinement}
			\State Use existing samples to split node $A_{\ell_j}, A_{r_j} \leftarrow A_j$ where
			$A_j = A_{\ell_j} \cup A_{r_j}$ \Comment See \S\ref{sec:projection} and \S\ref{sec:refinement}
			\State $\{\hat{x}_{\ell_j}^n,w_{\ell_j}^n\}_{n=1}^b \leftarrow \mathcal{F}\left(\gamma(x),q\left(x| A_{\ell_j}\right),b\right)$, \quad
			$\{\hat{x}_{r_j}^n,w_{r_j}^n\}_{n=1}^b \leftarrow \mathcal{F}\left(\gamma(x),q\left(x | A_{r_j}\right),b\right)$
			\Else {~} Run further inference on current node
			$\left\{\hat{x}_{j}^n,w_j^n\right\}_{n=N_j+1}^{N_j+b} \leftarrow \mathcal{F}\left(\gamma(x),q\left(x | A_j\right),b\right)$ ~ {\textbf{end if}} \vspace{-2pt}
			\EndIf
			\State Update $\hat{\gamma}_j(\cdot)$ and $\hat{\omega}_j$ for refined node(s) and all ancestors using~\eqref{eq:emp-measure-prop} and~\eqref{eq:ml-estimate} respectively ~ {\textbf{end for}}
%			\Comment See \S\ref{sec:estimator}
		\EndFor
		\State Return tree $\mathcal{T}$ and self normalized empirical measure $\hat{\pi}_0(\cdot)\leftarrow\hat{\gamma}_0(\cdot)/\hat{\omega}_0$
		%\State \textbf{return} $responses$
		%\EndProcedure
	\end{algorithmic}
\end{algorithm}

\section{The Inference Tree Estimator}
\label{sec:ITest}

Assume we are trying to estimate the expectation
of a measurable function $f(x)$ with respect to the target measure $\pi(x)$.  
For any set of disjoint regions $\{A_i\}_{i \in \mathcal{I}}$
covering the full target space, we have
\begin{align}
\expt_{\pi(x)}[f(x)] &=
\frac{1}{\omega}\int \gamma(x) f(x) \, dx
= \frac{1}{\omega}\int \gamma(x) f(x) \sum_{i \in \mathcal{I}} \iden(x \in A_i) \, dx
%&= \sum_{i \in \mathcal{I}} \int p(x_{1:T}, Y) f(x_{1:T})
%\prod_{t=1}^T \iden(z_{t} \in Z_{i,t}) \, dx_{1:T} \nonumber \displaybreak[0] \\
= \frac{\sum_{i \in \mathcal{I}} \expt_{q\left(x \mid A_i\right)} %\left(x_{1:T} \mid \{z_{1:T}\in B_i\}\right)} 
\left[\frac{ \gamma(x) f(x) }
{q\left(x \mid A_i\right)}\right]}
{\sum_{i \in \mathcal{I}} \expt_{q\left(x \mid A_i\right)} %\left(x_{1:T} \mid \{z_{1:T}\in B_i\}\right)} 
	\left[\frac{ \gamma(x)}
	{q\left(x \mid A_i\right)}\right]} \nonumber \displaybreak[0] \\
&\approx \frac{\sum_{i \in \mathcal{I}} \frac{1}{N_i} \sum_{n=1}^{N_i} w_i^n f(\hat{x}_i^n)}
{\sum_{i \in \mathcal{I}} \frac{1}{N_i} \sum_{n=1}^{N_i} w_i^n} \quad \text{where} \quad
\hat{x}_i^n \sim q(x | A_i), \quad w_i^n := \frac{\gamma(\hat{x}_i^n)}{q(\hat{x}_i^n | A_i)}. \label{eq:expt-comb}
\end{align}
We now see that we can calculate estimates separately for each region $A_i$
and then combine these in an unweighted manner -- there are no
correction factors for the strategy used to assign computational resources.
However, we emphasize that there are two key reasons that we are able to do this.  Firstly, rather than
locally self-normalizing, we separately combine unnormalized target estimates and an estimate for the normalization constant, and then \emph{globally self-normalize} the estimate.  Secondly, the truncated 
proposals $q(x|A_i)$ are \emph{correctly normalized} such that $\int_{x \in A_i} q(x|A_i) dx = 1$.

In practice, we often do not know $f(x)$ at inference time. However, we can always compute empirical measures based on weighted samples
$\frac{1}{N_i}\sum_{n=1}^{N_i} w_i^n \delta_{\hat{x}_i^n} (\cdot)$, which can then later be used to evaluate any target function as and when required.

Though the leaves of an inference tree form a suitable disjoint partitioning of the target
space, we also have access to local estimates from non-leaf nodes, left over from when those
nodes were previously leaves themselves.  The IT estimator is therefore constructed
recursively, such that the estimate at any node is a combination of its child estimates
and this local estimate; the propagation step of the algorithm corresponds to
online updates of these estimates.  
To combine estimates from parents with the children, 
we introduce a preference factor to the estimator
from the child nodes, $c_j \in [0,1]$,
and define the IT estimator for node $j$ recursively using
\begin{subequations}
\label{eq:pihat}
\begin{align}
\label{eq:emp-measure-prop}
\hat{\pi}_j \left(\cdot\right) := \frac{\hat{\gamma}_j(\cdot)}{\hat{\omega}_j}, \quad \text{where} \quad
\hat{\gamma}_j(\cdot) &:= \frac{(1-c_j)}{N_j} \sum\nolimits_{n=1}^{N_j} w_j^n \delta_{\hat{x}^n_j} (\cdot)
+c_j \left( \hat{\gamma}_{l_j} (\cdot) +  \hat{\gamma}_{r_j} (\cdot) \right),
\\
\hat{\omega}_j &:= 
\frac{(1-c_j)}{N_j} \sum\nolimits_{n=1}^{N_j} w_j^n
+c_j \left( \hat{\omega}_{l_j}+  \hat{\omega}_{r_j}\right), \label{eq:ml-estimate}
\end{align}
\end{subequations}
$\ell_j$ and $r_j$ refer to the child node indices, and our overall estimate is given
by that of the root node $\hat{\pi}_0\left(\cdot\right)$.  For leaves, $c_j=0$, such that
we simply take the local
estimate.  For internal nodes, let $M_j$ denote the total number of samples drawn at that node
or any of its descendants.  We then define $c_j = \chi_j(M_j-N_j)/M_j $  (such that 
$(M_j-N_j)/M_j $ is the proportion of the samples that are from the children)
and $\chi_j$ is an additional factor to account for the fact that the child estimate will generally
be more efficient than the parent (see Appendix~\ref{sec:app:child-pref}).  Critically, $c_j \to 1$ as
$M_j\to\infty$ for a fixed $N_j$.

The IT approach is backed up by the following consistency result in the number of IT iterations.  
%At a high-level,
%our assumptions equate to assuming that the base inference algorithm is itself consistent;
%that if a child node is traversed infinitely often, so is its sister node; and that the number of nodes in the tree remains finite.
%These are specified more precisely in Appendix~\ref{sec:app:alltheorm}, along with the proof.
\begin{restatable}{theorem}{it}
	\label{the:est-main}
	If the following hold as the number of IT iterations becomes infinitely large
	\vspace{-2pt}
	\begin{itemize}
		\itemsep-0.05em 
		\item[-] The total number of leaf nodes remains bounded
		and each is visited infinitely often;
	%	\item[-] $c_j\to1$ for all internal nodes
		\item[-] When provided with an infinite	sample budget and an
		arbitrary subregion $A$ generated by the node splitting procedure,
		the base inference algorithm produces  an
		empirical measure $\hat{\gamma}(\cdot)$
		 and normalization constant estimate $\hat{\omega}$  
		which respectively converge weakly to 
		$\gamma(x)\iden(x \in A)$ and 
		converge in probability to $\int_{x \in A} \gamma(x) dx$;
	\end{itemize} 
	\vspace{-2pt}
	then each $\hat{\pi}_j (\cdot)$ as defined by~\eqref{eq:pihat}
	converges weakly to $\pi(x|x\in A_j)$ and, in
	particular,
	$\hat{\pi}_0 (\cdot)$ converges weakly to $\pi(x)$.
\end{restatable}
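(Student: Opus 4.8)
The plan is to prove the statement by structural induction over the (eventually fixed) tree, working upward from the leaves to the root and using the recursive definition~\eqref{eq:pihat}. A convenient simplification is that every limiting object here is deterministic --- $\gamma$ is a fixed density and $\omega_j := \int_{x\in A_j}\gamma(x)\,dx$ a fixed number --- so ``$\hat\gamma_j$ converges weakly to $\gamma(x)\iden(x\in A_j)$'' just means that $\int f\,d\hat\gamma_j \pto \int_{x\in A_j} f(x)\gamma(x)\,dx$ for every bounded continuous $f$, and the usual Slutsky lemma can be applied throughout without worrying about joint convergence.

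First I would use the first assumption to pin down the asymptotic tree structure. Since the tree is binary, each split increases the number of leaves by exactly one, so a uniformly bounded leaf count means only finitely many splits ever occur and the topology is frozen after some finite iteration; internal nodes receive no further direct samples after being split, so their $N_j$ stay fixed. Thereafter every iteration simply appends a batch of $b$ samples to one leaf, and since every leaf is visited infinitely often, $N_j\to\infty$ at every leaf, $M_j\to\infty$ at every node, $c_j=0$ at every leaf, and $c_j\to1$ at every internal node. It therefore suffices to establish the claim on this fixed finite binary tree.

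The base case is immediate: at a leaf $j$, $\hat\gamma_j(\cdot)=\tfrac1{N_j}\sum_{n=1}^{N_j}w_j^n\delta_{\hat x_j^n}(\cdot)$ and $\hat\omega_j=\tfrac1{N_j}\sum_{n=1}^{N_j}w_j^n$ are exactly the base algorithm's outputs on $A_j$ under an unbounded sample budget, so the second assumption gives the required convergences with limits $\gamma(x)\iden(x\in A_j)$ and $\omega_j$. For the inductive step at an internal node $j$ with children $\ell_j,r_j$ satisfying the hypothesis, I would split~\eqref{eq:emp-measure-prop} as $\hat\gamma_j(\cdot)=(1-c_j)\hat g_j(\cdot)+c_j(\hat\gamma_{\ell_j}(\cdot)+\hat\gamma_{r_j}(\cdot))$, where $\hat g_j$ is the frozen local empirical measure. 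Its total mass $\tfrac1{N_j}\sum_{n=1}^{N_j}w_j^n$ is a fixed, almost surely finite random variable (a finite sum of finite weights), so $\int f\,d[(1-c_j)\hat g_j]$ is bounded by $(1-c_j)\|f\|_\infty\,\tfrac1{N_j}\sum_n w_j^n\to0$ almost surely; meanwhile $c_j(\hat\gamma_{\ell_j}+\hat\gamma_{r_j})$ converges weakly, by the hypothesis and $c_j\to1$, to $\gamma(x)\iden(x\in A_{\ell_j})+\gamma(x)\iden(x\in A_{r_j})=\gamma(x)\iden(x\in A_j)$ using $A_j=A_{\ell_j}\cup A_{r_j}$ with the two pieces disjoint. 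Adding the two via Slutsky gives $\hat\gamma_j$ converging weakly to $\gamma(x)\iden(x\in A_j)$, and the identical argument on~\eqref{eq:ml-estimate} gives $\hat\omega_j\pto\omega_{\ell_j}+\omega_{r_j}=\omega_j$. Since $\hat\pi_j=\hat\gamma_j/\hat\omega_j$ with $\omega_j>0$ deterministic, one more Slutsky step turns this into weak convergence of $\hat\pi_j$ to $\gamma(x)\iden(x\in A_j)/\omega_j=\pi(x\mid x\in A_j)$; the root case $\hat\pi_0$ converging weakly to $\pi(x)$ is the special case where $A_0$ is the full space.

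The main obstacle is making these measure-valued Slutsky arguments airtight: one has to account for the randomness in $c_j$ (hence in $M_j$ and $\chi_j$), verify carefully that the frozen local measures have almost surely finite mass so that the $(1-c_j)$ scaling genuinely kills them, and keep the modes of convergence straight when forming ratios --- all of which goes through cleanly exactly because every limit is deterministic, collapsing convergence in distribution to convergence in probability. A point worth flagging explicitly is why the adaptive traversal needs no importance-weight correction: this is precisely the observation behind~\eqref{eq:expt-comb}, namely that each leaf's local estimator is an honest (globally self-normalized, rather than locally self-normalized) estimator of $\gamma(x)\iden(x\in A_j)$ regardless of how its sample budget was allocated, so the adaptivity affects only the rate, never the consistency, of the local estimates.
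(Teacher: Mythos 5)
Your proposal is correct and follows essentially the same route as the paper: a structural induction from the leaves to the root in which the frozen local parent measure is killed by $(1-c_j)\to 0$, the child measures sum to the parent target by disjointness of $A_{\ell_j}$ and $A_{r_j}$, and Slutsky handles the global self-normalization. The only organizational difference is that the paper abstracts the combination step into a standalone lemma about weighted mixtures of consistent estimator sets (Lemma~\ref{the:as} and Corollary~\ref{the:post}) and then verifies its assumptions node by node, whereas you inline that argument directly --- and you are in fact slightly more explicit than the paper about why the almost-sure finiteness of the frozen local mass ensures the vanishing weight genuinely annihilates that term.
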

\vspace{-3pt}
%At a high-level,
%our assumptions equate to assuming that the base inference algorithm is itself consistent;
%that if a child node is traversed infinitely often, so is its sister node; and that the number of nodes in the tree remains finite.
%These are specified more precisely in Appendix~\ref{sec:app:alltheorm}, along with the proof.
%\begin{theorem}
%	\label{the:est-main}
%	If Assumptions~\ref{ass:cons},~\ref{ass:c}, and~\ref{ass:stop} in Appendix~\ref{sec:app:alltheorm} hold, 
%	then each $\hat{\pi}_j (\cdot)$ as defined by~\eqref{eq:pihat}
%	converges in distribution to $\pi(\cdot|x\in A_j)$ and, in
%	particular,
%	$\hat{\pi}_0 (\cdot)$ converges in distribution to $\pi(\cdot)$.
%\end{theorem}
%\vspace{-6pt}
The proof is given in Appendix~\ref{sec:app:alltheorm}.
We see that, subject to mild assumptions,
consistency is achieved regardless of our traversal and refinement strategies.  
Inevitably, however, these will affect the practical performance. In the following, we now
develop effective strategies for each component in turn.
%Note, though, that for a finite number of iterations, the estimates
%produced by ITs are biased, even when the base
%estimation scheme is not.  This is because the $N_j$ are random variables
%that are correlated with the estimate such that
%$\expt [\hat{\omega}_j \mid N_j=n_1]
%\neq \expt [\hat{\omega}_j \mid N_j=n_2]$ for $n_1\neq n_2$ in general.

%-- if it
%is initially underestimated, the adaptation strategy will 
%allocate fewer additional samples and the underestimation persists, inducing a bias.  
%Overestimates, on the other hand, are readily corrected, leading to a
%negative bias for most allocation strategies.

\subsection{Partitioning the Target Space}
\label{sec:projection}

Directly partitioning in the space of $x$ can be challenging. Typically it will not be desirable for the partitions to be axis-aligned (or even linear). Conversely, it is in general not possible
to evaluate the partitioned proposal $q(x|A_j)$ for arbitrary $A_j$.
To address this, ITs use a reparameterization of the \emph{proposal}, such that $x=g(z_{1:T})$
where $z_{1:T}$ is uniformly distributed on the unit hypercube $[0,1]^T$.  Though this
is not always exactly the case, $g$ can generally be thought of as an inverse cumulative distribution
function of $q(x)$ (with $T$ set to the dimensionality of $x$).  ITs use axis-aligned
partitions in the space of $z_{1:T}$, which in turn induce (typically nonlinear) partitions on
$x$.  The motivation for this is twofold.  Firstly, because the distribution is uniform over $z_{1:T}$, this
eliminates the problem of trying to choose splits that align well with the contours of $q$.
Secondly, it means that we can easily sample from and evaluate $q(x | A_j)$: $A_j$ will
always represent a hyperrectangle $B_j$ in the space of $z_{1:T}$, so we can sample
from $q(x | A_j)$ by sampling uniformly from $B_j$ and passing the samples through $g$, while 
$q(x | A_j) = q(x) \iden(x \in A_j) / \lVert B_j\rVert$ where $\lVert B_j\rVert$ is the 
volume of this hyperrectangle, leading to simple evaluation as required by the importance
weight evaluations.  See Appendix~\ref{sec:app:projection} for further discussion.

%
%\subsection{Proof Of Consistency}
%\label{eq:consistency}
%\input{consistency}

%\section{Tree Learning Algorithm}
%\label{esc:learning}
% !TEX root =  it_nips2018.tex
%
%In the last section we showed that ITs lead to consistent estimates regardless of the tree learning
%algorithm, provided some weak assumptions are satisfied.  However, their
%practical performance will inevitably heavily depend on this learning algorithm.
%Our algorithm for constructing an IT starts with a single root node 
%and then iterates \emph{traversal}, \emph{refinement}, and \emph{propagation} steps,
%forming a continuously improving online estimation scheme that was outlined in
%Figure~\ref{fig:alg}.   While the propagation step simply equates to using the estimator
%rules outlined in Section~\ref{sec:estimator} to make online updates to the parameters
%required by traversal such as $\hat{\omega}_j$, the traversal and refinement strategies
%require further consideration.  Our focus here will be on the inference setting, 
%the integration setting is considered in Appendix~\ref{sec:app:det}.

%Due to space restrictions, we provide only key high-level ideas 
%of our approach here and refer the reader
%to Appendix~\ref{sec:det} for a more complete description.  

\section{Traversal Strategy}
\label{sec:traversal}

As explained in \S\ref{sec:overview}, the traversal strategy starts at the root node and
then recursively chooses the child node with the higher utility $u_j$ until a leaf node
is reached.  Though we will use a utility of the UCB form given in~\eqref{eq:ucb-str},
our reward estimate $\hat{r}_j$ will reflect both the need for
exploitation and exploration, unlike in standard approaches where it represents only
exploitation.  

We start by quoting our final choice for the utility, before explaining each of the component terms in
detail.  Using $\text{pa}(j)$ and $\text{si}(j)$ to denote the parent and sister of node $j$ respectively, we have
\begin{align}
\label{eq:uct_it}
u_{j}
=\frac{1}{M_{j}}\Bigg(\left(1-\delta\right)\left(\frac{\hat{\tau}_{j}}{\hat{\tau}_{\text{pa}(j)}}\right)^{\left(1-\alpha\right)}+ \delta \frac{\hat{p}^s_{j}}{\hat{p}^s_{j}+\hat{p}^s_{\text{si}(j)}}
+ \beta \frac{\lVert B_{j} \rVert}{\lVert B_{\text{pa}(j)} \rVert} \frac{\log M_{\text{pa}(j)}}{\sqrt{M_{j}}}\Bigg)
\end{align}
Here $\hat{\tau}_j$ estimates the optimal asymptotic rate for sampling the node (see~\eqref{eq:exploit-reward}), 
while $\hat{p}^s_{j}$ is a subjective probability estimate for the node containing significant 
posterior mass (see~\eqref{eq:den-est}).  Consequently, the first and second terms encourage exploitation 
and targeted exploration respectively, with
$\delta \in [0,1]$ being a parameter that controls the relative emphasis.  We will typically reduce $\delta$
over time to encourage more exploitation, along with $\alpha \in [0,1]$, an annealing
parameter that encourages sampling of the tails.
  The
  different normalizations for $\hat{\tau}_j$ and $\hat{p}^s_{j}$ 
  %(remembering that each node also contributes a local estimate along with those passed from the children) 
  originate from the fact that we want the exploration term to dominate
  whenever $\hat{\tau}_{\text{pa}(j)} \gg \hat{\tau}_{j}+\hat{\tau}_{\text{si}(j)}$,
  implying that the children have underestimated the exploitation target.
The last term is a classical optimism boost~\cite{auer2002finite}, with the exception that it is scaled by the
relative volume of the node $\lVert B_{j} \rVert / \lVert B_{\text{pa}(j)} \rVert$.
We now discuss  $\hat{\tau}_j$ and $\hat{p}^s_{j}$ in detail.

\subsection{Exploitation Target}

To derive our exploitation target, $\tau_j$, we ask the question: what is the asymptotically optimal rate
for allocating samples to regions?  In other words, if all our node estimates were perfect, how should
we allocate our samples?  One might intuitively expect that the answer to this would be to allocate 
samples in proportion to the marginal probability mass of a region.  However, it turns out that this is 
not the case: the variance on the weights is different for different regions and so we also need to sample more
from regions where this variance is high.  In fact, as we show in Appendix~\ref{sec:app:ess}, the optimal
allocation strategy is to sample according to
\begin{align}
\label{eq:exploit-reward}
\tau_{j} = \sqrt{\omega_{j}^2+(1+\kappa) \sigma_{j}^2} 
\end{align}
where $\sigma_{j}^2$ is the variance of the weights (as produced by single traversal) and
$\omega_{j}$ is the marginal posterior mass of the region as before. Here $\kappa \in [0,\infty]$ is a
``smoothness'' parameter, which dictates the relative importance of the two terms when using the generated samples
to estimate a particular expectation as per~\eqref{eq:expt-comb}.  For example, $\kappa \to \infty$ 
corresponds to the optimal setup for estimating the marginal likelihood, for which $f(x)=1$ is completely flat.
%On the other hand $\kappa=0$ is the optimal setup for estimating expectations of infinitely varying functions.

To estimate $\tau_{j}$, we use $\hat{\tau}_{j} = \sqrt{\hat{\omega}_{j}^2+(1+\kappa) \hat{\sigma}_{j}^2}$ 
with propagated estimates $\hat{\omega}_{j}$ and $\hat{\sigma}_{j}^2$.  The former is given by~\eqref{eq:ml-estimate},
while the latter requires a distinct propagation scheme as discussed in Appendix~\ref{sec:app:ess}.

\subsection{Targeted Exploration through Density Estimation of the log Weights}

%The problem with just using this pure-exploitation target as a reward function
%is two-fold. Firstly, r
Relying only on the optimism boost for exploration, as done by standard UCB schemes,
can be chronically inefficient in practice as it only encourages a uniform exploration.
We, therefore, introduce a targeted exploration term into our utility, $\hat{p}_j^s$, which provides a subjective
probability estimate for the event that the region contains significant posterior mass that we have thus far missed.

%, whereas we will show that targeted exploration can lead to significant gains.
Providing such a reliable estimate is a challenging problem. Our global proposal $q(x)$ is often very
poor meaning standard MC estimates can be woefully inadequate: we will consider experiments where we regularly
underestimate the marginal likelihood (ML) by factors in excess of $10^{1000}$.

% However, 
%the chance of generating a $\tau_{j}$ close to this bound is typically exceedingly small, while its
%distribution around the mode is heavily positively skewed with a very heavy tail, such that
%we typically pathologically underestimate $\tau_{j}$. %; the harder the inference, the worse this problem becomes.  
%For example,
%when using SMC as the base inference, $\log w^n_j$ is often
%close to Gaussian distributed away from the bound~\citep{pitt2012some,berard2014lognormal,doucet2015efficient}, 
%inducing similar behavior in $\tau_j$.
%Thus~\eqref{eq:ucb} will usually severely under explore unless $\beta$ is set to impractically large
%values -- $\beta$ forces the same level of exploration everywhere, rather than targeting
%particular regions.
%%provides exploration by ensuring each arm is pulled a minimum number of times,
%%such that the exploration is not target to any particular regions.

Our insight is that, even when the ML is substantially underestimated, the raw log weights
still convey useful information about what the true value \emph{could} be.
%\footnote{We note the link between this idea and that of variational inference~\citep{blei2017variational}:
%the evidence lower bound can be thought of in terms of an expected log importance sampling weight.}
We exploit this insight by carrying out \emph{density estimation of the log weights}  and using this as
a basis for constructing $\hat{p}_j^s$.  Consider the demonstrative
\begin{wrapfigure}{r}{0.45\textwidth}
%	\vspace{-11pt}
	\centering
	\includegraphics[width=0.45\textwidth,trim={4cm 0 4cm 0},clip]{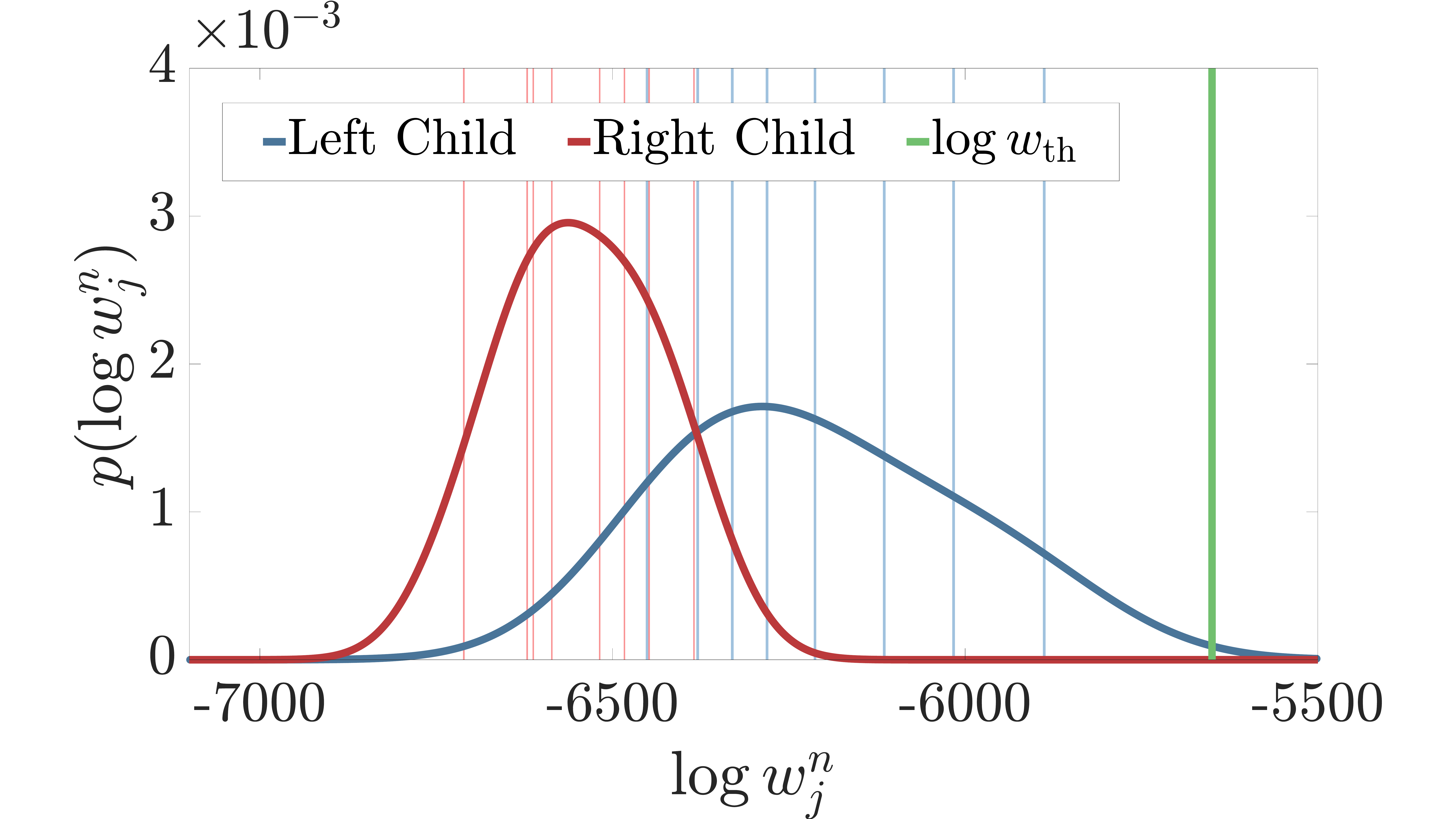}
	\vspace{-18pt}
	\caption{Density estimation for log weights.  
		\label{fig:den-exp}}
			\vspace{-15pt}
\end{wrapfigure}
example shown in Figure~\ref{fig:den-exp} where we want to predict whether the true
log ML of each child is above some threshold $\log w_{\text{th}}$.  Here we 
see that there is a high chance that the left child has a true
ML above the threshold, but we can be reasonably confident the right does not.  
Critically, we can make this assertion even though our MC estimates for the ML
are underestimated by hundreds of orders of magnitude.

To formalize this intuition, let $\psi(\log w_j^n)$ denote a density estimator for a nodes local weights, 
with associated cumulative density $\Psi(\log w_j^n)$.  The key idea 
is to use this density estimator to predict the probability that one more samples will exceed a target
threshold $\log w_{\text{th}}$ if we were to generate another $T$ ``lookahead'' samples, where $T$ is some large, but finite, number.
When $\log w_j^n$ varies over a large range, the MC estimate for the ML is effectively equal to
the maximum weight, and so we have
\begin{align}
P\left(\hat{\omega}_j(T) > w_{\text{th}}\right) \approx
P\left(\max(w_j^{1:T})>w_{\text{th}}\right) \approx
1-(1-\Psi(\log w_{\mathrm{th}}))^T
\end{align}
where $\hat{\omega}_j(T)$ is MC estimate for the ML after taking $T$ samples.
Though we could now use this estimate to construct $\hat{p}^s_j$ directly, 
we apply a heuristic of scaling by the effective sample size (ESS)~\cite{owen2013mc} of the node (see 
Appendix~\ref{sec:app:ess}) on the basis that a high ESS suggests that we have already
a reasonable ML estimate and thus do not need to explore further.  

To complete the picture, we define the propagation strategy for these probability estimates by assuming
that the $\hat{p}_j^s$ are independent for sibling nodes, finally yielding the recursive definition\footnote{In practice, 
	we also use some additional heuristics, giving a slightly different estimator.
	See	Appendix~\ref{sec:app:true-lik}.}
\begin{align}
\label{eq:den-est}
\hat{p}^s_j := \left(1-c_j\right)\frac{1-(1-\Psi(\log w_{\mathrm{th}}))^T}{\text{ESS}_j}
+c_j \left(\hat{p}^s_{\ell_j}+\hat{p}^s_{r_j}-\hat{p}^s_{\ell_j}\hat{p}^s_{r_j}\right)
\end{align} 
analogous to that of $\hat{\omega}_j$ in~\eqref{eq:ml-estimate}.  In our experiments, we found
$\log w_j^n$ was typically well approximated by a Gaussian
(there is also theoretical evidence this is appropriate when SMC
is used as the base algorithm~\citep{berard2014lognormal,doucet2015efficient,pitt2012some})
and so this simple choice was taken for $\psi$.
In cases where this gives a poor fit, one could instead use a kernel density estimator.
Setting $T$ and $w_{\text{th}}$ is detailed in Appendix~\ref{eq:app:int}.

\section{Refinement Strategy}
\label{sec:refinement}
%and so little is to be gained by splitting 

Once a leaf is chosen by the traversal, there are two ways we can refine the tree: 
update the local estimate or split the node.  The two considerations here are whether to split and how to split.  

%Though not necessary for convergence, choosing good splits can be highly beneficial
%to the practical performance of the IT algorithm. 
%For example, if we make a split with roughly
%even posterior mass in each child branch, it will be necessary to traverse both
%branches and little is gained from the split (unless we need multiple splits to get a good
%partitioning).  In fact, such splits can be highly detrimental as they effectively introduce
%additional modes from the perspective of the traversal.

At a high-level, a good partitioning structure
is one in which the posterior mass is concentrated in a small number of regions.
In essence, we gain most from being able to ``eliminate regions'' from consideration,
reducing the proportion of the target space that needs to be actively considered.
When we propose to split a node, we thus want to find the split that best concentrates
the posterior mass.
Conveniently, we can use the samples already generated at the node to try and
predict what will be a good split.  Namely, we can hypothesize a number of splits and
then evaluate how well each split will concentrate the mass, based on the existing samples.
Though we do not directly use them in this way, ITs indirectly parameterize an importance sampling
proposal, whereby we traverse the tree, recursively sampling a child with probability
proportional to $\hat{\tau}_j$.  We can, therefore, measure the concentration of mass through the entropy
of this implied proposal.

Recall from \S\ref{sec:projection} that ITs use axis-aligned partitions in the reparameterized 
space $z_{1:T}$ and that our proposal for a leaf node is uniform in this space.  We can therefore
analytically calculate the entropy of a hypothetical split (see Appendix~\ref{sec:app:refine})
and use this as loss criterion for choosing a split:
\begin{align}
\label{eq:split-criterion}
\textsc{Loss}(\text{split}) & 
= 
\hat{\omega}_{\ell} \log\frac{\lVert B_{\ell}\rVert}{\hat{\omega}_{\ell}} + 
\hat{\omega}_{r} \log\frac{\lVert B_{r}\rVert}{\hat{\omega}_{r}}
\end{align}
where the child volumes and marginal probability estimates are implied for any
hypothetical split. The lower this loss, the more
information our split conveys about where the posterior mass is concentrated.  As
hypothetical splits can be quickly tested -- there
is no need to run inference -- we can efficiently test out
a relatively large number ($\sim100$)  of random splits and then choose the one that minimizes~\eqref{eq:split-criterion}.
We then initialize the newly generated nodes by running inference separately on
each of them.
%
%\subsection{Propagation}
%\label{sec:prop}
%
%The propagation involves making online updates to estimates such as $\omega_j$ and
%$\sigma_j$ at the refined node(s) and all their ancestors.  Most such updates will
%follow the combination strategies discussed in Section~\ref{sec:ITest}, but a small number of
%estimates require special propagation strategies, as discussed in Appendix~\ref{sec:app:prop-special}.
%We note that the raw samples (i.e. the empirical measures) are never themselves
%propagated during the learning process for computational reasons, and that the computational
%costs of propagation and traversal scale with the tree depth, rather than the number of drawn samples.

We further introduce heuristics for whether to split in order to avoid unnecessary over-splitting.  
Firstly, we only attempt to split once $N_j$ reaches a certain threshold
and if the ratio $\text{ESS}_j/N_j$ falls below a certain threshold: we want to stop splitting once
a node represents a near perfect sampler.  Secondly, whenever we split a node, we
check that split passes a usefulness test, namely a significance test that the distributions of the 
$\log w_j$ are different, rejecting the split if this test fails.

\section{Experiments}
\label{sec:experiments}
% !TEX root =  it_nips2018.tex

\begin{figure*}[t!]
	\centering
	\includegraphics[width=0.4\textwidth]{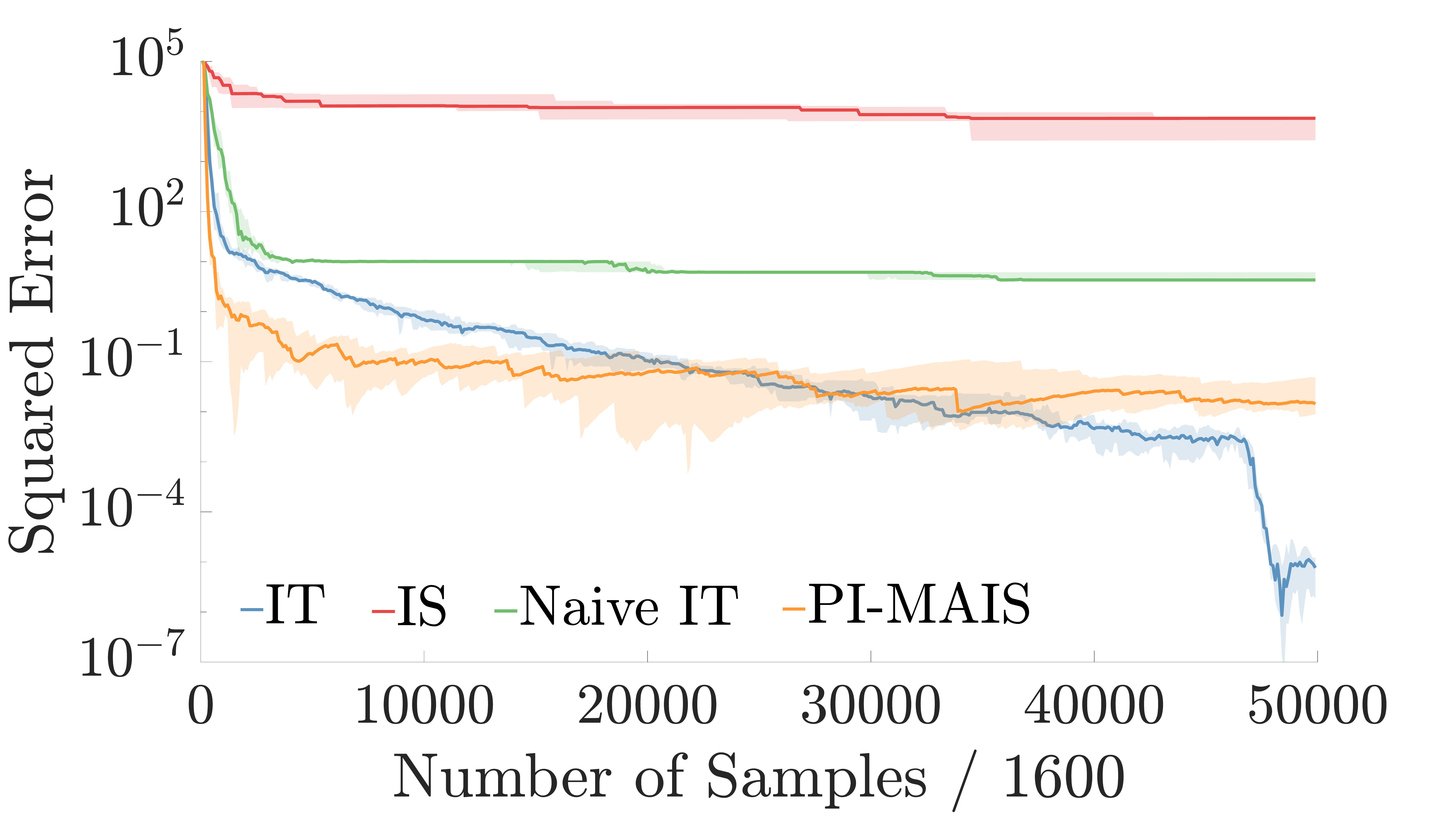} \hspace{20pt}
	\includegraphics[width=0.4\textwidth]{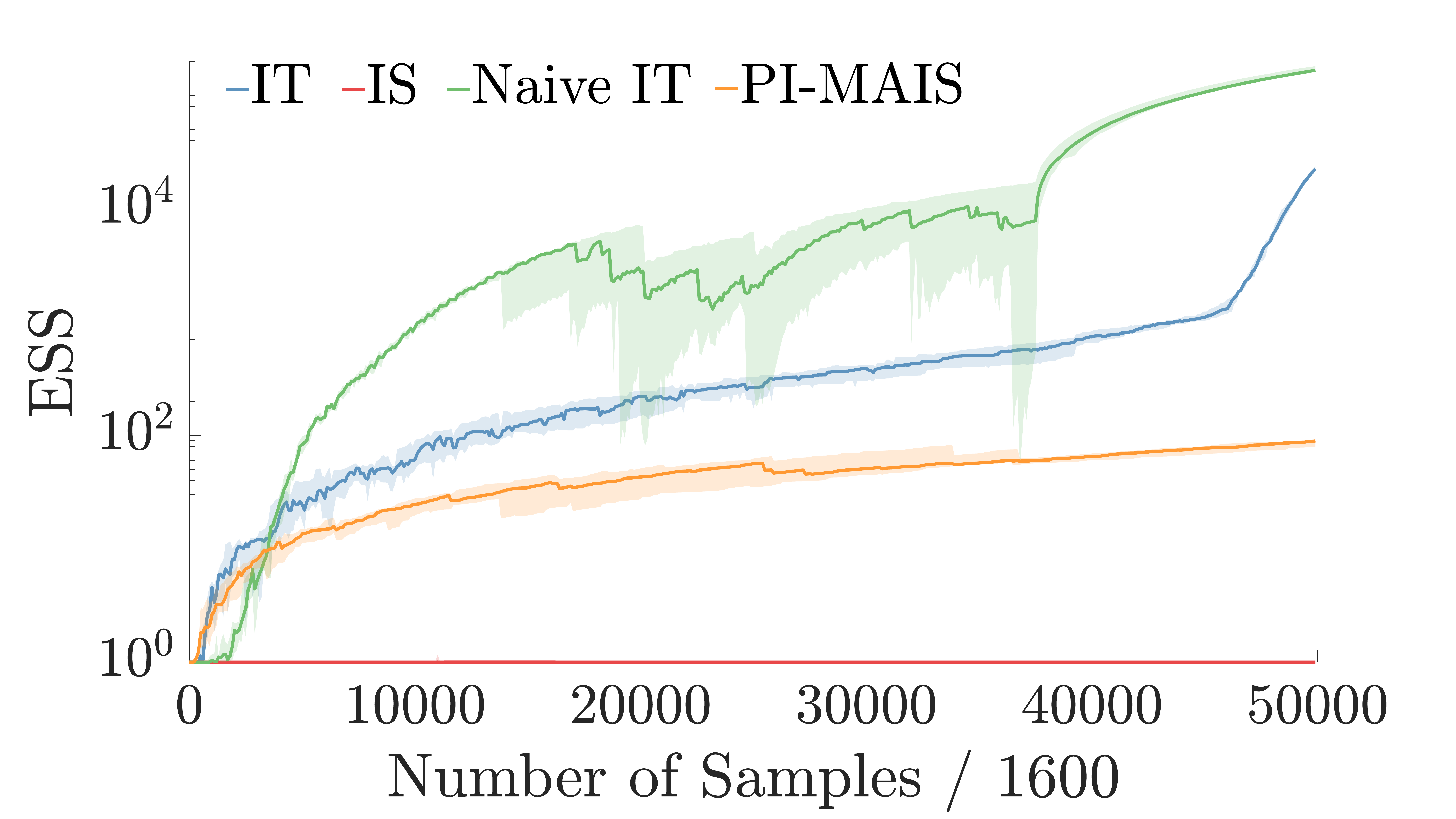}
	\vspace{-5pt}
	\caption{
		Convergence for the GMM in terms of the log ML estimate (left) and
		the ESS (right).  The ground truth log marginal was estimated using
		a very large number of samples and a manually adapted proposal.  Solid lines represent
		median over 10 runs and shading the 25\%-75\% quantiles. The reason for the ``just-in-time''
		style convergence of the IT stems from the fact that the parameter annealing 
		schedules start to kick in and encourage far more exploitation near the end of the runs.
%		The late surge in ESS for the IT is
%		because the parameter annealing schedules start to kick in and encourage far more
%		exploitation.  This is also why the IT appears to converge to the true log marginal
%		``just-in-time'' for the end of the experiment.  
		%All 10 IT runs gave final log marginal estimates with 0.01 of the truth.
		\label{fig:gmm-conv}}
\end{figure*}

\subsection{Gaussian Mixture Model}
\label{sec:gmm}

Our first experiment is to infer the cluster means in a Gaussian mixture model (GMM).
Specifically,
\begin{align*}
{\mu}_k \sim \mathcal{N}({0}, \Sigma_{\mu}), \quad \,\,
z_n \big|\ {\pi} \sim \mathrm{Categorical}(\{1/K,\dots,1/K\}), \quad \,\,
{y}_n \big|\ z_n = k , {\mu}_k \sim \mathcal{N}({\mu}_{k}, \Sigma_{y}),
\end{align*}
where we set $\Sigma_{\mu} = {I}$,  $\Sigma_{y} = 0.2{I}$,
and $K=4$.  We generated a two-dimensional synthetic dataset ${y}_{1:200}$ using the generative model
and then ran ITs with importance sampling as the base algorithm to conduct inference on ${\mu}_k$,
with the $z_n$ marginalized out by summation.  We use the prior on ${\mu}_k$ as our base proposal.
Though simple, this constitutes a surprisingly
challenging inference problem, as symmetries in the model
mean that the posterior is concentrated in $24$ well-separated modes, each of which occupy less than $10^{-10}$ of the 
overall eight-dimensional parameter space.

For computational efficiency, we fixed ``one run'' of the base inference algorithm to be
comprised of drawing $100$ importance samples and we undertook $16$ runs of
this base algorithm for each refinement step (with each counting as a separate traversal). 
We further took the convention in, for example, log weight density estimation that
each ``run'' returns a single amalgamated $w_i^n$, which might itself contain multiple samples (similarly
$w_i^n$ becomes the SMC ML estimate in the next experiment).
We compared to the following baselines given the same total budget of target density evaluations:
non-adaptive importance sampling; a na\"{i}ve IT implementation where we set $\delta=0$, $\alpha=0$, and $\beta=0.5$,
which means that our target ignores the $\hat{p}_j^s$ terms and relies solely on the optimism boost for
exploration; and PI-MAIS~\citep{martino2017layered}, a state-of-the-art adaptive importance sampler based
on simulating a large number of Markov chains to construct the proposal.  
Each algorithm was given a budget of $8\times10^7$ target evaluations, with the parameters set 
as per Appendix~\ref{sec:app:exp}.

For comparison, we examined the convergence of the ML estimate
and ESS (Figure~\ref{fig:gmm-conv}) and a kernel density estimator of the final 
output (Figure~\ref{fig:gmm-den}).  The results show that ITs outperformed
the alternatives.  Unsurprisingly, vanilla importance sampling performed poorly throughout,
ending with an ESS of effectively 1.  The na\"{i}ve IT implementation managed to generate
a very high ESS, but typically only found two or three modes leading to a substantial error
in the ML estimate.  PI-MAIS did better at finding modes, though still substantially worse than IT.
Further, it ended with a low ESS and produced poor estimates for the relative sizes
of the modes, in turn giving an inferior log ML estimate.

\subsection{Chaotic Dynamics Model}
\label{sec:chaos}

%We next consider using SMC as the base inference algorithm.
Dealing with long-range dependencies, i.e.~variables that have influence many steps 
after they are sampled, can be challenging in SMC as variables are often fixed before all dependent 
terms are incorporated, leading to sample degeneracy.  Viewing this in another light, 
the intermediate target distributions can vary substantially from the target marginal
distribution on the relevant variables.
Na\"{i}ve strategies for dealing with this tend to be futile -- the resampling step 
always corrects to the intermediate target and thus incorporating lookahead information 
in proposals %is usually frivolous and 
often reduces the effective sample size.  
In some cases, auxiliary weighting schemes provide a degree of
lookahead~\citep{doucet2006efficient,lin2013lookahead}, but 
these typically entail a substantial
increase in computational cost while providing only a short-range lookahead. Moreover, problems with degeneracy
can be compounded in the context of adaptation as information is only received for particles that survive the
resampling.  We now show that ITs can address 
these challenges by running inference on separate regions.  Namely, the IT process allows information to be gathered even in the face of
degeneracy. Constraining different sweeps to different regions allows samples to be 
``forced through'' the resampling steps, hereby dealing with long-range dependencies.  This is done without
losing the key benefits of SMC, as gains from resampling are still seen when running inference
within a particular region.  Note that ITs only require an unbiased estimate for the weights
in a manner akin to pseudo-marginal methods~\cite{andrieu2009pseudo},
%Note that we need not use importance sampling to calculate the individual
%partition estimates -- we could for example use SMC.  In fact, 
%we do not even need to be able to evaluate
%$p(\hat{x}^n_{1:T,i}, Y)$ exactly -- we only need be able to generate unbiased samples for it
%in a manner akin to pseudo-marginal methods~\cite{andrieu2009pseudo}.  
such that
we can run SMC when there are some latent variables not directly controlled by the IT.
%, i.e. $p(x_{1:T} | Y) = \int p(x_{1:T},u_{1:S} | Y) du_{1:S}$.
% and we run SMC inference on $p(x_{1:T},u_{1:S} | Y) $.

%
%
%\begin{figure}[t!]
%	\centering
%	\subfigure[PMMH]{\includegraphics[width=0.155\textwidth]{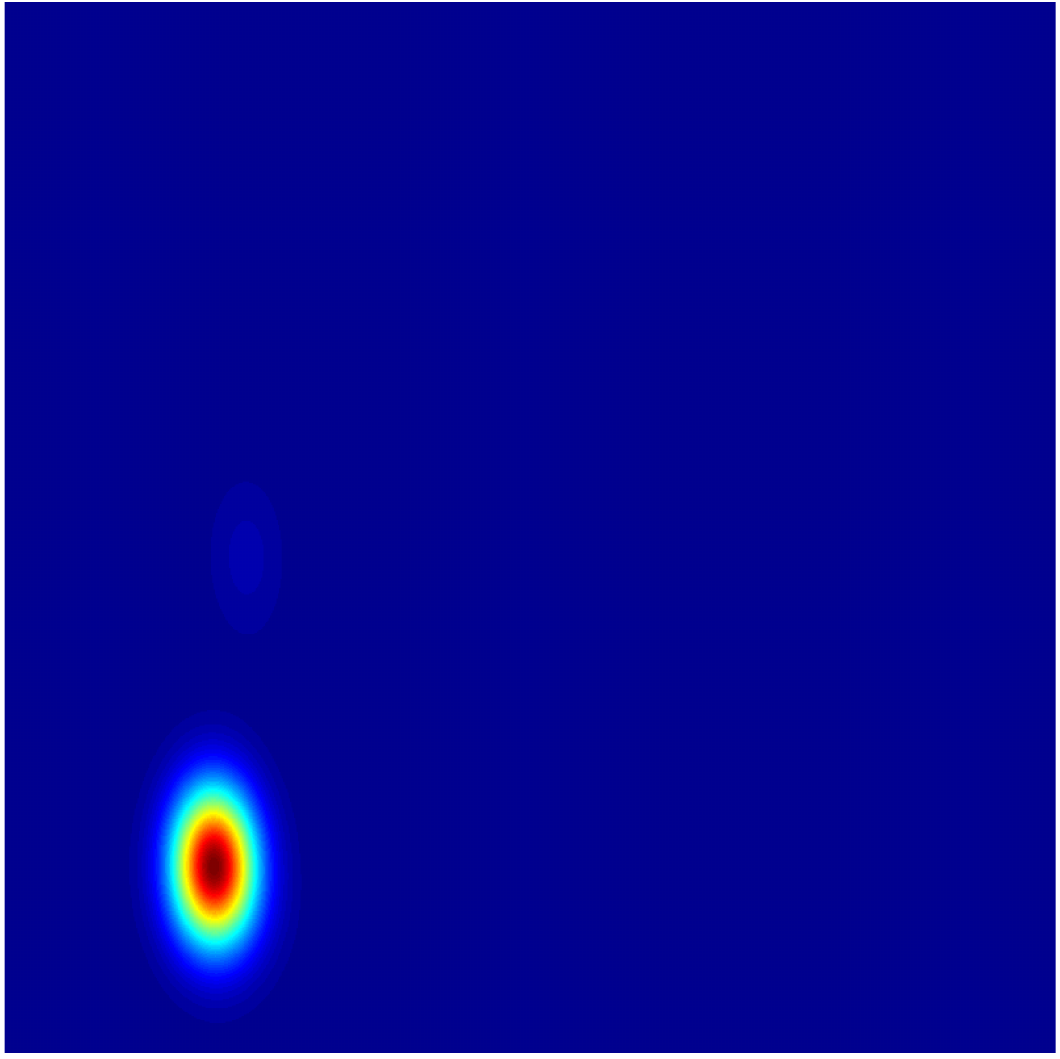}} \hspace{-2pt}
%	\subfigure[Na\"{i}ve IT]{\includegraphics[width=0.155\textwidth]{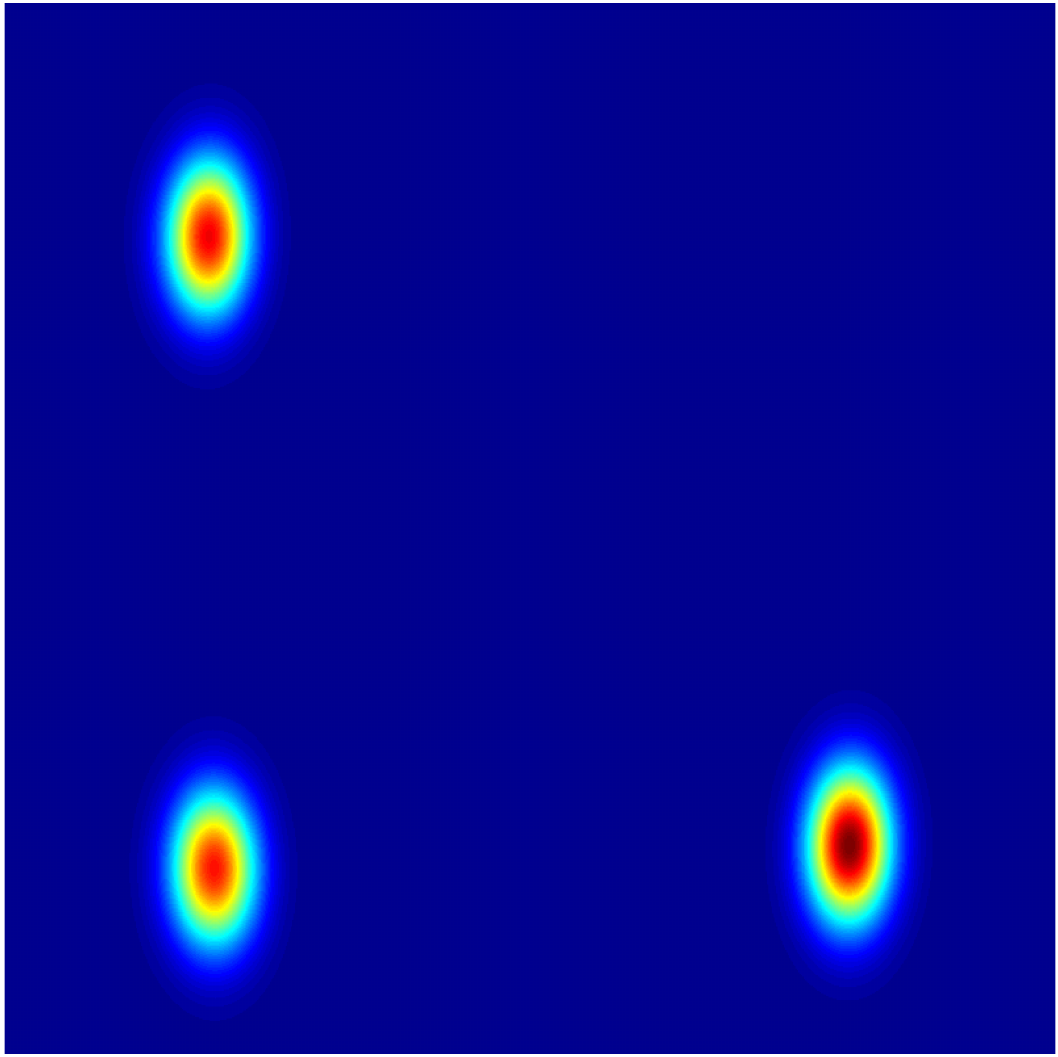}} \hspace{-2pt}
%	\subfigure[IT]{\includegraphics[width=0.155\textwidth]{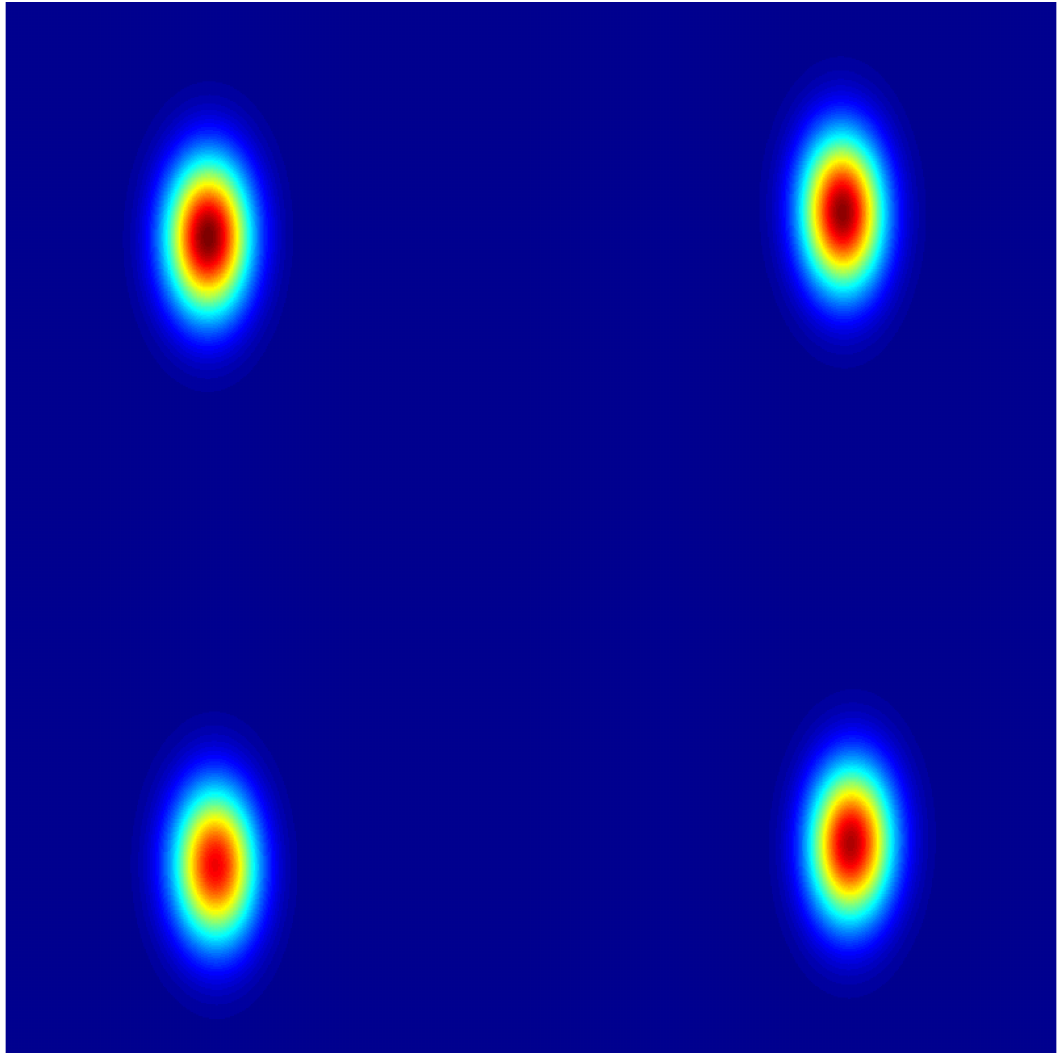}}%
%	\caption{Kernel density estimate of projected posterior estimates for
%		chaos model with exaggerated mode variances.  We see that PMMH
%		successfully exploits one of the modes, but it does not move been
%		modes.  Meanwhile,
%		the na\"{i}ve IT finds three of the modes and the main IT algorithm all four.
%		\label{fig:chaos-den}}
%\end{figure}

\begin{figure}[t!]
	\centering
	\subfigure[PI-MAIS]{\label{fig:tree-smc-IT}\includegraphics[height=0.155\textwidth]{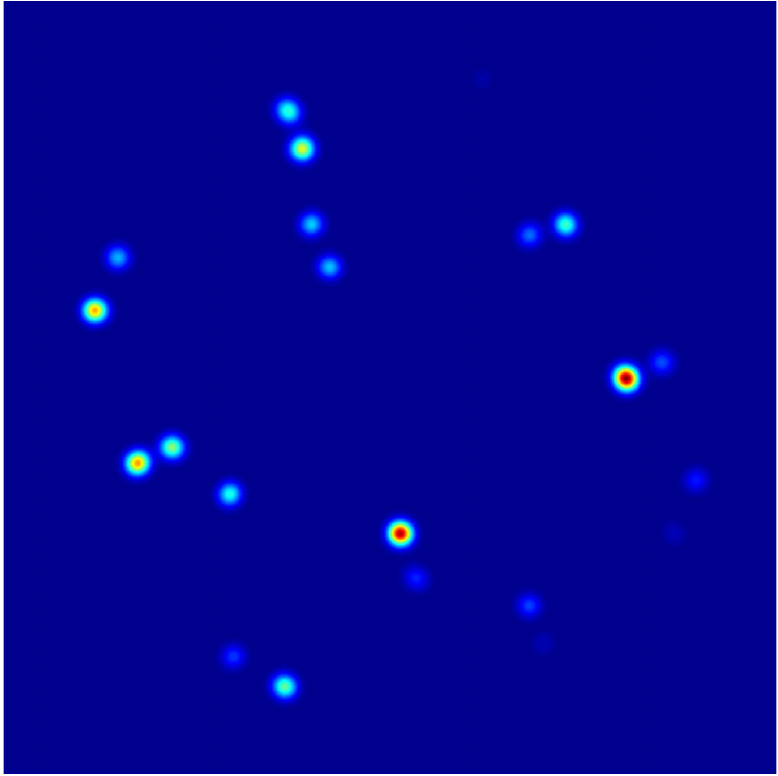}} \hspace{-2pt}
	\subfigure[Na\"{i}ve IT]{\label{fig:gmm}\includegraphics[height=0.155\textwidth]{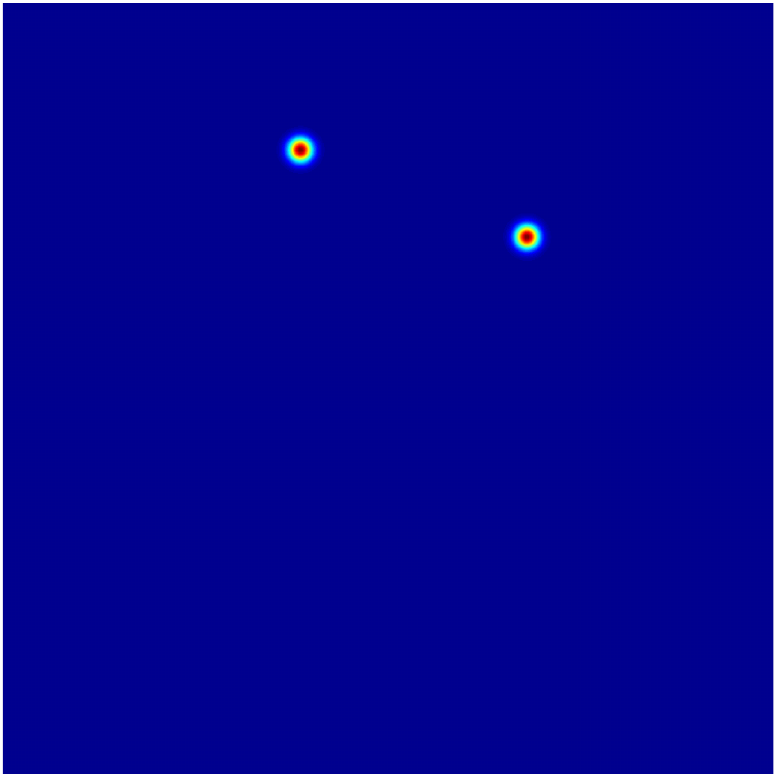}} \hspace{-2pt}
	\subfigure[IT]{\label{fig:network}\includegraphics[height=0.155\textwidth]{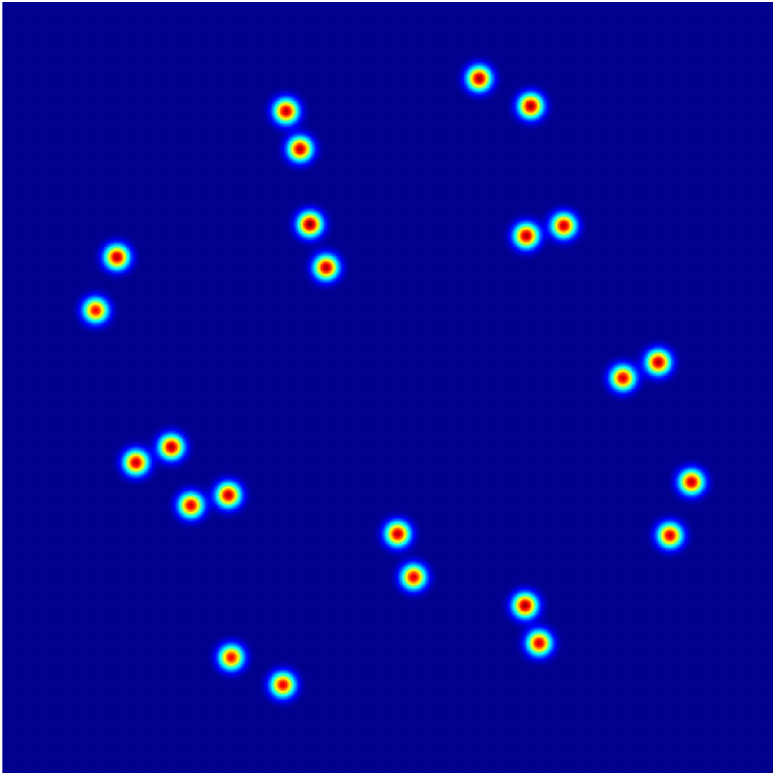}} \hspace{-2pt}
	\subfigure[PMMH]{\includegraphics[height=0.155\textwidth]{chaos_densities_pmmh}} \hspace{-2pt}
	\subfigure[Na\"{i}ve IT]{\includegraphics[height=0.155\textwidth]{chaos_densities_naive}} \hspace{-2pt}
	\subfigure[IT]{\includegraphics[height=0.155\textwidth]{chaos_densities_tree}}%
	\vspace{-4pt}
	\caption{Kernel density estimate of projected posterior estimates for the GMM (a-c) and the chaos model (d-f).  
		We use a linear projection
		of the original 8/4-dimensional spaces and exaggerate the
		variance of the modes for visualization purposes. For both problems, the IT has successfully recovered
		all modes and inferred that all the modes have equal mass. Though the 
		na\"{i}ve IT implementation produced good estimates for the modes it found, it missed modes
		for both problems. For the GMM, PI-MAIS found a number of modes but still missed some and misestimated their relative masses.  For the chaos model, PMMH only found a single mode. 
		\vspace{-12pt}
		\label{fig:gmm-den}}
\end{figure}

\begin{figure*}[t!]
	\centering
	\includegraphics[width=0.4\textwidth]{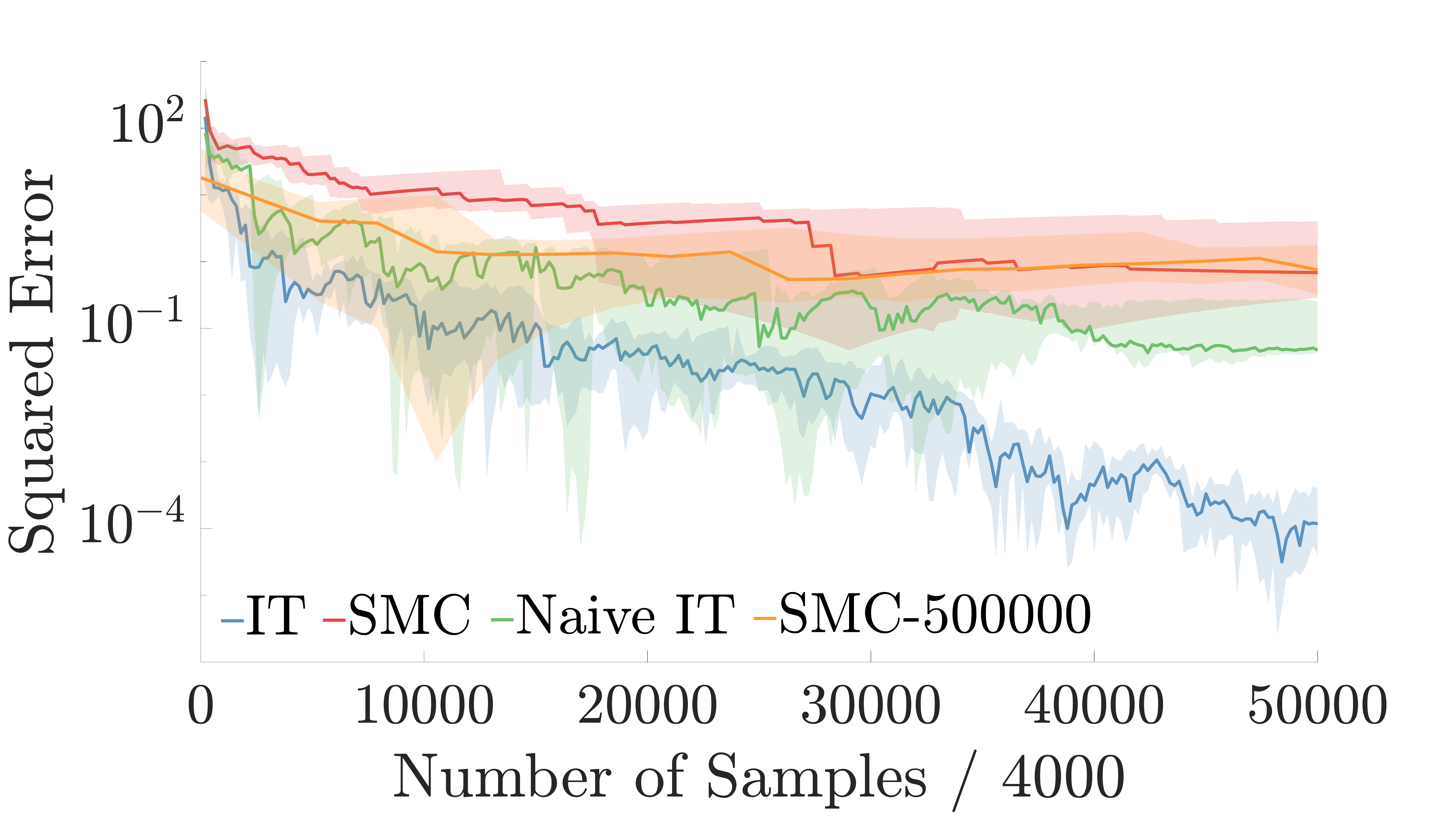} \hspace{20pt}
	\includegraphics[width=0.4\textwidth]{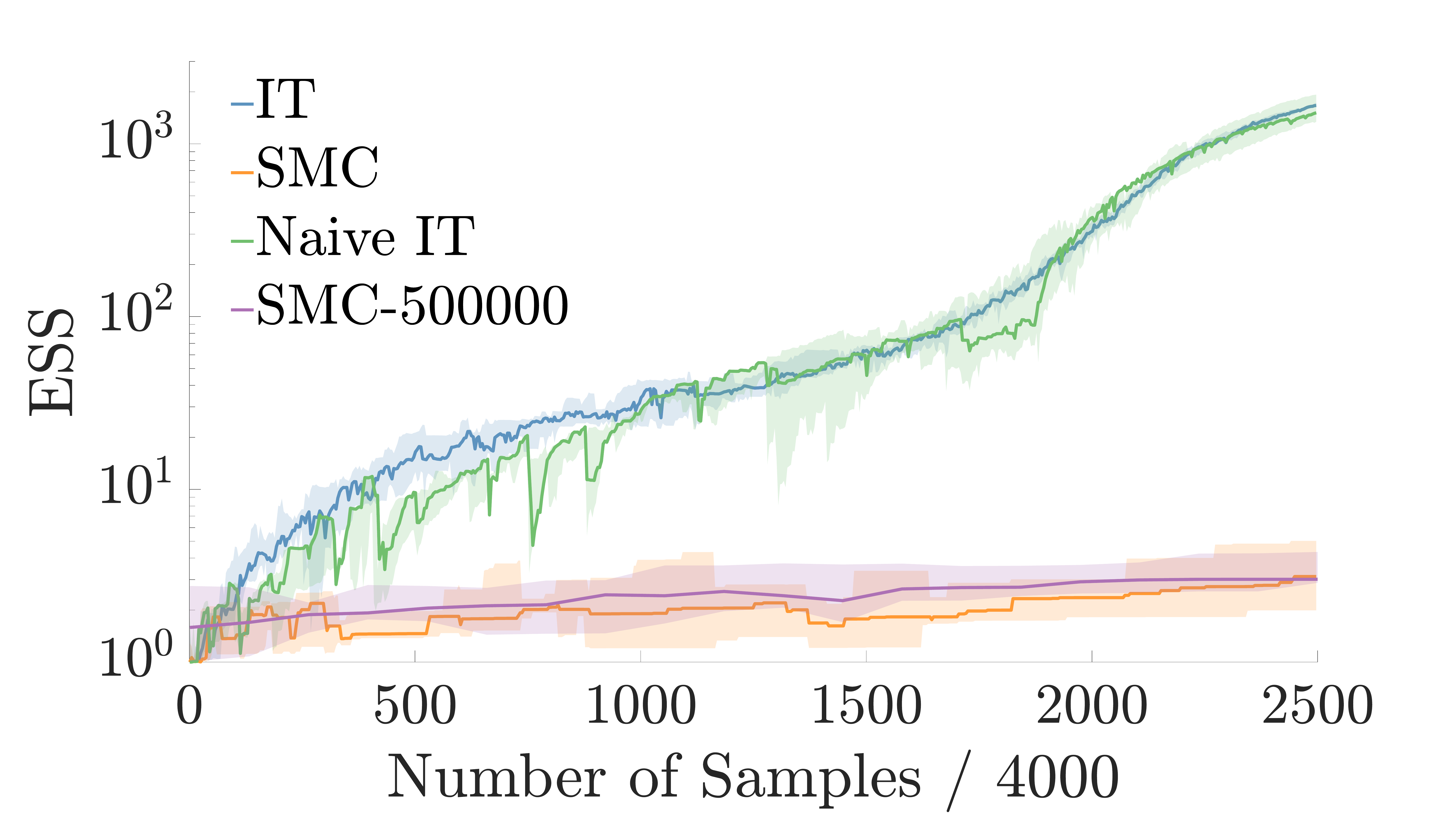}
	\vspace{-4pt}
	\caption{Convergence of log ML and ESS for chaos model,
		conventions as per Figure~\ref{fig:gmm-conv}.  PMMH is not shown as it returns unweighted samples
		and no ML estimate; other results are given in Appendix~\ref{sec:app:pmmh}.  
		\vspace{-7pt}
		%All IT runs gave final log ML estimates within 0.02 of the truth.
		\label{fig:chaos_conv}}
\end{figure*}

To test ITs in this setting, we consider an adaptation of the chaotic dynamical system tracking problem 
introduced by~\cite{rainforth2016BOPP}, details for which are given in Appendix~\ref{sec:app:exp}.  
The model comprises of an extended Kalman filter where we have dynamics parameters $\theta$, latents
$x_{1:T}$, and observations $y_{1:T}$.  We desire to conduct inference over both 
the dynamics parameters and the latent variables, but
will only use ITs to control the sampling of the former.  This model
contains long-range dependencies because the dynamics parameters affect each
transition and so the smoothing marginal $p(\theta | y_{1:T})$ is very different to 
the filtering marginal $p(\theta | y_{1})$.  In fact, the two are so different that
using the so-called one-step-optimal proposal, the target for most methods
of SMC proposal adaptation~\cite{gu2015neural}, provides no noticeable performance
improvement over simply sampling from the prior.
%Furthermore, there are again symmetries in the model, causing $p(\theta | y_{1:T})$ to
%have four dominant modes.

Because PI-MAIS requires an MCMC sampler to be run on the target $p(\theta | y_{1:T})$,
it is inappropriate for this problem.  We instead compare to using SMC
without adaptation, SMC with 1000 times more particles, the na\"{i}ve IT implementation,
and PMMH~\cite{andrieu2010particle}, a method explicitly designed for dealing
with global parameters in SMC.  We allowed a budget of $1\times10^7$ target evaluations and used
$8$ SMC sweeps of $500$ particles per refinement step for the IT approaches.
Details on parameters setups are given in Appendix~\ref{sec:app:exp}.
We used the same comparison metrics as for the GMM, with results shown in Figures~\ref{fig:gmm-den} 
and~\ref{fig:chaos_conv}.
We see that ITs again outperformed the other methods.

\section{Conclusions}	
\label{sec:discussion}	
% !TEX root =  it_nips2018.tex

We have introduced inference trees (ITs), a new adaptive inference algorithm
drawing on ideas from Monte Carlo tree search.  We have shown that, by
carrying out explicit \emph{exploration} in the adaptation process, ITs can
avoid common pathologies with other adaptive schemes and reliably
uncover multiple modes.  We have consequently found that, for
the tested models, ITs outperformed previous state-of-the-art adaptive importance
sampling and particle MCMC methods.  In addition to the immediate
utility of the proposed approach, we believe that the general IT framework
opens up many opportunities for new research, due to the separation between
their consistency and the specifics of the learning algorithm.
For example, ITs can also be used for integration (see Appendix~\ref{sec:app:det}).

\clearpage
\appendix
%
%\section{Combining Estimates from Partitions}
%\label{sec:mlpart}
%\input{combine}
	
\section{Additional Details on Partitioning the Target Space}
\label{sec:app:projection}
% !TEX root =  it_nips2018.tex

As explained in the main paper, effectively partitioning in the space of $x$ is difficult and so
we perform a reparameterization of the \emph{proposal} to a  
``cumulative distribution space'', 
such that $x = g(z_{1:T})$ and each $z_t \sim \textsc{Uniform}(0,1)$.  In this
reparameterized space, we use axis aligned partitions, such that any region
can be defined using
\begin{align}
B_j := \zeta_1^j \times \zeta_2^j \times \dots \times \zeta_T^j
\end{align}
where each $\zeta_t^j \subseteq [0,1]$ is a partition for the corresponding dimension of
$z_t$.  These partitions then in turn define partitions on $x$, namely we have
\begin{align}
A_j := \{ g(z_{1:T}) : z_{1:T} \in B_j\}
= \{ g(z_{1:T}) : z_1 \in \zeta_1^j  \cap z_2 \in \zeta_2^j \cap \dots \cap z_T \in \zeta_T^j  \}
\end{align}
A high level description of this process is shown below.

\begin{figure*}[h]
	\centering
	\includegraphics[width=0.7\textwidth]{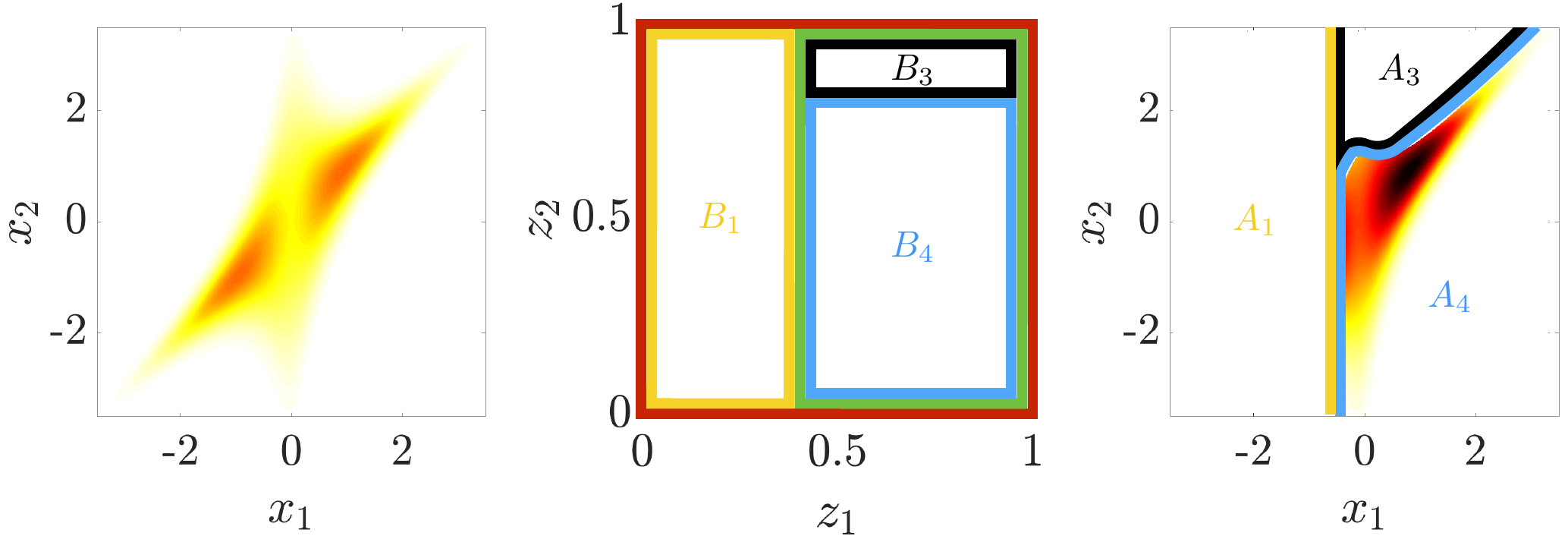}%
	\caption{Truncation of a proposal $q(x_{1:2})$.
		%		Representation of an IT with proposal
		%		$q(x_{1:2})=\mathcal{N}(x_1 ; 0, 1)
		%		\mathcal{N}(x_2 ; x_1, \sqrt{1/\left(0.5+4x_1^2\right)})$ [far left]. 
		Numbering left to right, [1] shows the original proposal and [2] the hierarchical partitioning
		of $z_{1:2}$ imposed by the tree.
		%	where nodes $1,3$, and $4$ are leaves.
		%		; $B_2 = B_3 \cup B_4$ and $B_0 = B_1 \cup B_2$.
		[3] shows the partitioning implied by $q(x_{1:2})$ and
		the leaf nodes
		on the target space $x_{1:2}$, where we note that the partition between $A_3$ and
		$A_4$ is nonlinear.  It further shows
		the proposal truncated to $A_4$ and renormalized.
		\label{fig:expl}}
\end{figure*}

In general, $g$ can be thought of as an inverse cumulative distribution function.
Namely, if we presume that $x$ is also $T$ dimensional and our proposal factorizes as
\[
q(x_{1:T}) = q(x_1) q(x_2 | x_1) \dots q(x_T | x_{1:T-1})
\]
then $z_t$ is defined by the series of cumulative distribution mappings
\begin{align}
\label{eq:cum}
z_t := \eta_t(x_{t} ; x_{1:t-1}) = \int_{-\infty}^{x_t} q_t(x_t' | x_{1:t-1}) dx_t',
\end{align}
which in turn implicitly defines $g$.  
As we are free to choose the form of the
proposal, we can always ensure that $g$ can be calculated.  In some scenarios, it might
even be helpful to define $q(x)$ implicitly through $g$.
Note that~\eqref{eq:cum} further implies that the marginal proposals
can be expressed in the form
\[
q_t(x_{1:t}) = g_t(z_{1:t}).
\]
such that we can can sequentially generate $x_{1:T}$, as required in the SMC setting.

Another important point of interest is that it is perfectly permissible for $g$ to map multiple different $z_{1:T}$ to the
same $x$.  For example, this is necessary when $x$ is discrete.  In this scenario, 
the $A_j$ may no longer be disjoint,\footnote{From a practical perspective, we postulate that
it may sometimes be preferable to not perform the reparameterization for discrete variables
and instead directly split these in the space of $x$.} but here we can instead rely on
the law of the unconscious statistician: we can think in terms of performing inference on
$z_{1:T}$ (for which the partitions are disjoint) and then taking the pushforward 
distribution this induces on $x$.  Note that this does not require any algorithmic
changes.

Because the distribution over $z_{1:T}$ is a uniform hypercube, the probability of 
generating an $x$ whose pre-image is in $B_j$ is just the hypervolume of $B_j$ (which is
in turn given by the product of the lengths of $\zeta^j_t$).  Therefore, after drawing
from the truncated proposal $q(x | x \in A_j)$ by sampling $\hat{z}_{1:T,j}^n \sim \textsc{Uniform}(B_j)$ and setting $\hat{x}_j^n = g(\hat{z}_{1:T,j}^n)$, we can evaluate
the corresponding weights using
\begin{align}
w_j^n := \frac{\pi(\hat{x}_j^n)}{q(\hat{x}_j^n | \hat{x}_j^n \in A_j)}
% = \frac{q_{t}(x_{t} | x_{1:t-1}) \iden(z_t \in Z_{i,t})}
%{\int q_{t}(x'_{t} | x_{1:t-1}) \iden(z_t \in Z_{i,t}) dx'_{t}} \nonumber \\
%&
= \frac{\pi(\hat{x}_j^n)}{q(\hat{x}_j^n | \hat{z}_{1:T,j}^n \in B_j)}
= \frac{\pi(\hat{x}_j^n)}{q(\hat{x}_j^n)}\lVert B_j\rVert
\label{eq:prop}
\end{align}
where $\lVert B_j\rVert$ is just the (known) area of $B_j$.

We finish our discussion of partitions the target space by noting that it should
be possible to also adapt proposals within individual regions, in addition to the
adaptation already provided by inference trees.  This can be done by sampling $z_{1:T} | B_j$
from a non-uniform distribution which is learned adaptively,
and adjusting~\eqref{eq:prop} accordingly.

\section{Theoretical Justification}
\label{sec:app:alltheorm}
% !TEX root =  it_nips2018.tex

In this section, we demonstrate the correctness of the IT algorithm.

We first demonstrate that for any partitioning $\{A_i\}_{i \in \mathcal{I}}$ and set of consistent estimators
for each partition, then the combination strategies given in \S\ref{sec:ITest} similarly lead
to consistent estimators.  Moreover, we demonstrate that this convergence holds when we combine multiple
sets of estimators, each with their own partitioning, for example the parent estimator and children estimator
in~\eqref{eq:pihat}.  
%Proving the convergence of the IT algorithm will be straightforward given this lemma.  
At a high-level we make three assumptions: each constituent estimator is consistent in isolation,
each set of estimators only has finite combination weight in the limit of large overall computational budget
if each of constituent region estimators receives a finite proportion of that overall computational budget, 
and the number of each regions is finite for each estimator set.  For exposition, we will, for now, assume that the $A_i$ are
disjoint (in Assumption~\ref{ass:cons}), but we show in Appendix \ref{sec:discrete} how that this assumption can be relaxed to any proposal constructed
from the form given in Appendix \ref{sec:app:projection}.

\begin{assumption}
	\label{ass:cons}		
	Let $\X$ denote the support of $x$.
For every independent estimator set $\ell \in \{1,\dots,L\}$, we are given a) a disjoint
partitioning $\{A_{\ell,i}\}_{i \in \mathcal{I}_\ell}$ of the $\X$ such that $A_{\ell,i} \cap A_{\ell,j} = \emptyset$
for $i\neq j$ and $\bigcup_{i \in \mathcal{I}_\ell} A_{\ell,i} = \X$,
and b) a family $\{\hat{\gamma}^{N_{\ell,i}}_{\ell,i}\}_{i \in \mathcal{I}_\ell}$ of estimated
measures on $\X$
\[
\hat{\gamma}^{N_{\ell,i}}_{\ell,i} \left(\cdot\right) := \frac{1}{N_{\ell,i}} \sum_{n=1}^{N_{\ell,i}}
w^n_{\ell,i} \, \delta_{\hat{x}_{\ell,i}^n} \left(\cdot\right)
\]
for some random variables $w^n_{\ell,i}$ and $\hat{x}_{\ell,i}^n$ such that each
$\hat{\gamma}^{N_{\ell,i}}_{\ell,i}(\cdot)$ converges weakly to the following measure on $\X$ as $N_{\ell,i} \rightarrow \infty$
\[
\gamma(x) \iden(x\in A_{\ell,i}).
\]
Further each marginal probability estimate converges in probability as follows
\[
\hat{\omega}^{N_{\ell,i}}_{\ell,i} := \frac{1}{N_{\ell,i}} \sum_{n=1}^{N_{\ell,i}}
w^n_{\ell,i} 
\pto 
\int_{\X} \iden(x \in A_{\ell,i}) \gamma(dx).
\]
\end{assumption}

\begin{assumption}
\label{ass:c}
Let $k_{\ell} : \real_{\ge 0} \to \real_{\ge 0}$ be combination weight functions which produce unnormalized
combination weights $k_{\ell} \left(N_{\ell}\right)$ when 
provided with the total number of samples used for the corresponding estimator set 
$N_{\ell} = \sum_{i \in \mathcal{I}_\ell} N_{\ell,i}$ such that $\lim_{N_{\ell}\to\infty} k_{\ell}(N_{\ell}) = \infty$
for each $\ell$, each $k_{\ell}(N_{\ell})$ is finite for any finite $N_{\ell}$,
and $\sum_{\ell=1}^L k_{\ell}(N_{\ell}) >0$ whenever $R = \sum_{\ell=1}^L N_{\ell} > 0$.
We further assume that for each estimator set $\ell$, either all of the $N_{\ell,i}$ tend to infinity or none of them.
More precisely, we assume there is a
non-empty subset $\mathcal{L}_0 \subseteq \{1,\ldots,L\}$ such that 
for all $\ell \in \mathcal{L}_0$ and $i \in \mathcal{I}_\ell$,
\[
\lim\limits_{R\to\infty} N_{\ell,i} = \infty
\]
and for all $\ell \not\in \mathcal{L}_0$,
\[
\lim\limits_{R\to\infty} N_{\ell} < \infty
\]
almost surely.
\end{assumption}

\begin{assumption}
	\label{ass:stop}
	Each $\mathcal{I}_{\ell}$ is a finite set.
\end{assumption}

The last of these assumptions can probably be relaxed to $\mathcal{I}_{\ell}$ being a countable set, but as
it will be algorithmically beneficial to ensure that the depth of the tree remains bounded, this case
is of little interest anyway.  The need for the second assumption is to ensure that any individual estimator which
only has finite computational budget in the limit of large overall budget is given zero weight after normalization.

We are now ready to demonstrate the consistency of our estimator combination.
\begin{lemma}\label{the:as} 
        If Assumptions~\ref{ass:cons},~\ref{ass:c} and~\ref{ass:stop} hold, then
        \begin{align} 
                \label{eq:estimator} 
                \hat{\gamma}^{\{N_{\ell,i}\}_{\ell,i}} :=  \frac{1}{\sum_{\ell=1}^{L} k_{\ell} \left(N_{\ell}\right)} 
                \sum_{\ell=1}^{L} k_{\ell}\left(N_{\ell}\right) \sum_{i \in \mathcal{I}_{\ell}} \hat{\gamma}^{N_{\ell,i}}_{\ell,i} 
        \end{align}
        converges weakly to the measure $\gamma(x)$ on $\X$ as $R\to\infty$.
\end{lemma}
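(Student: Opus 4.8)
The plan is to reduce the claim to verifying, for an arbitrary bounded continuous test function $f$, that $\hat{\gamma}^{\{N_{\ell,i}\}_{\ell,i}}(f) \pto \gamma(f)$ as $R\to\infty$, together with the corresponding statement for total mass, $\hat{\gamma}^{\{N_{\ell,i}\}_{\ell,i}}(\X) \pto \gamma(\X) = \omega$; in the probabilistic mode inherited from Assumption~\ref{ass:cons}, these jointly give the asserted weak convergence of the random measures, since bounded continuous functions together with total mass form a convergence-determining family. Write $\lambda_\ell := k_\ell(N_\ell)\big/\sum_{\ell'=1}^{L} k_{\ell'}(N_{\ell'})$ for the random, nonnegative, summing-to-one normalized combination weights and $S_\ell := \sum_{i \in \mathcal{I}_\ell} \hat{\gamma}^{N_{\ell,i}}_{\ell,i}(f)$, so that $\hat{\gamma}^{\{N_{\ell,i}\}_{\ell,i}}(f) = \sum_{\ell=1}^{L} \lambda_\ell S_\ell$. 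The argument then splits along the partition of estimator sets into $\mathcal{L}_0$ and its complement supplied by Assumption~\ref{ass:c}.

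First I would treat the sets $\ell \in \mathcal{L}_0$. For such $\ell$ every $N_{\ell,i}\to\infty$ as $R\to\infty$, so Assumption~\ref{ass:cons} gives $\hat{\gamma}^{N_{\ell,i}}_{\ell,i}(f) \pto \int_{A_{\ell,i}} f\,d\gamma$ and $\hat{\omega}^{N_{\ell,i}}_{\ell,i} \pto \int_{A_{\ell,i}} \gamma(dx)$; here one small technical step is needed because the constituent convergence is stated along a deterministic index $N_{\ell,i}\to\infty$ whereas $N_{\ell,i}$ is itself a random, a.s.-divergent index, so a subsequence/composition argument transfers it to convergence in probability as $R\to\infty$. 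Since $\mathcal{I}_\ell$ is finite (Assumption~\ref{ass:stop}) and $\{A_{\ell,i}\}_{i\in\mathcal{I}_\ell}$ partitions $\X$, summing the finite collection yields $S_\ell \pto \sum_{i}\int_{A_{\ell,i}} f\,d\gamma = \gamma(f)$, and similarly the mass of $\sum_i \hat{\gamma}^{N_{\ell,i}}_{\ell,i}$ tends to $\omega$.

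Next I would treat $\ell \notin \mathcal{L}_0$. By Assumption~\ref{ass:c} such $N_\ell$ is a.s. eventually constant and finite (it is a bounded non-decreasing integer sequence), so $k_\ell(N_\ell)$ is a.s. eventually a fixed finite value; meanwhile, because $\mathcal{L}_0$ is nonempty and $k_{\ell'}(N_{\ell'})\to\infty$ for $\ell'\in\mathcal{L}_0$, the denominator $\sum_{\ell'} k_{\ell'}(N_{\ell'})\to\infty$ a.s., hence $\lambda_\ell \to 0$ a.s. At the same time $|S_\ell| \le \|f\|_\infty \sum_{i\in\mathcal{I}_\ell}\hat{\omega}^{N_{\ell,i}}_{\ell,i}$, which is a.s. eventually a fixed a.s.-finite random variable and therefore tight, so $\lambda_\ell S_\ell \pto 0$. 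Combining the two regimes, $\sum_{\ell\notin\mathcal{L}_0}\lambda_\ell \pto 0$, hence $\sum_{\ell\in\mathcal{L}_0}\lambda_\ell \pto 1$, and
\[
\hat{\gamma}^{\{N_{\ell,i}\}_{\ell,i}}(f) - \gamma(f) = \sum_{\ell\in\mathcal{L}_0}\lambda_\ell\bigl(S_\ell-\gamma(f)\bigr) + \Bigl(\textstyle\sum_{\ell\in\mathcal{L}_0}\lambda_\ell - 1\Bigr)\gamma(f) + \sum_{\ell\notin\mathcal{L}_0}\lambda_\ell S_\ell ,
\]
where the first sum is bounded in absolute value by $\max_{\ell\in\mathcal{L}_0}|S_\ell-\gamma(f)| \pto 0$ (a finite maximum, since $\mathcal{L}_0 \subseteq \{1,\dots,L\}$), the second term $\pto 0$, and the third $\pto 0$. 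Running the identical argument with $f\equiv 1$ and the mass statement of Assumption~\ref{ass:cons} gives $\hat{\gamma}^{\{N_{\ell,i}\}_{\ell,i}}(\X) \pto \omega$, which finishes the proof.

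The main obstacle is bookkeeping around randomness rather than any deep idea: the allocations $N_{\ell,i}$ and the weights $k_\ell(N_\ell)$ are random and data-dependent, so care is required (i) to transfer the constituent weak convergence, stated for a deterministic divergent index, to the random-index setting — this is where the subsequence argument enters — and (ii) to justify that multiplying the merely a.s.-finite estimators $S_\ell$ for $\ell\notin\mathcal{L}_0$ by the a.s.-null weights $\lambda_\ell$ yields convergence in probability, which relies on the tightness observation above. A secondary point is pinning down precisely what ``weak convergence'' of a sequence of random measures is taken to mean here and confirming that testing against bounded continuous functions plus tracking total mass is sufficient; handling the mass separately (as Assumption~\ref{ass:cons} already does) makes the argument robust to whether ``weak'' is read in the $C_b$ or vague sense.
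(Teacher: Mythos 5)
Your proposal is correct and follows essentially the same route as the paper's own proof: test against an arbitrary bounded continuous $f$, split the estimator sets into $\mathcal{L}_0$ and its complement, use Assumption~\ref{ass:cons} together with the partition property to recover $\gamma(f)$ from the $\mathcal{L}_0$ terms, and kill the remaining terms via the vanishing normalized weights. You are somewhat more careful than the paper on two points it leaves implicit --- transferring the deterministic-index convergence to the random indices $N_{\ell,i}$, and using tightness of the non-convergent $S_\ell$ to justify $\lambda_\ell S_\ell \pto 0$ --- but these are refinements of the same argument rather than a different one.
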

\begin{proof}
By assumption we have that 
each $\hat{\gamma}^{N_{\ell,i}}_{\ell,i} $ converges weakly
to the measure $\gamma(x)\iden(x \in A_{\ell,i})$ as $N_{\ell,i}$ tends to $\infty$. Thus, for each $\ell \in \mathcal{L}_0$,
$\sum_{i \in \mathcal{I}_\ell} \hat{\gamma}_{\ell,i}^{N_{\ell,i}}$
converges weakly to the measure
\[
        \sum_{i \in \mathcal{I}_\ell}
        \gamma(x)\iden(x \in A_{\ell,i})
        =
        \gamma(x)\sum_{i \in \mathcal{I}_\ell}
        \iden(x \in A_{\ell,i})
        =
        \gamma(x)
\]
as $R\to\infty$. The estimates for $\ell \notin \mathcal{L}_0$ need not converge
but do not affect the final estimate as 
\[
\lim\limits_{R\to\infty} \frac{k_{\ell} (N_{\ell})}{\sum_{\ell=1}^{L}
	k_{\ell} (N_{\ell})} =0 \quad \forall \ell \notin \mathcal{L}_0.
\]
To show the claim of this
theorem, we now consider an arbitrary bounded continuous function $f : \X \to \real$
for which we have
\begin{align*}
        \int f(x)\, \hat{\gamma}^{\{N_{\ell,i}\}_{\ell,i}}(d x)
        &= 
        \int
        f(x)
        \frac{1}{\sum_{\ell=1}^{L} k_{\ell}\left(N_{\ell}\right)} 
        \sum_{\ell=1}^{L} k_{\ell}\left(N_{\ell}\right) \sum_{i \in \mathcal{I}_{\ell}} \hat{\gamma}^{N_{\ell,i}}_{\ell,i}(d x) 
        \\
        &
        =
        \frac{1}{\sum_{\ell=1}^{L} k_{\ell}\left(N_{\ell}\right)} 
        \sum_{\ell=1}^{L} k_{\ell}\left(N_{\ell}\right) 
        \sum_{i \in \mathcal{I}_\ell} 
        \int f(x) \hat{\gamma}^{N_{\ell,i}}_{\ell,i}(d x) 
        \end{align*}
which using Assumptions~\ref{ass:cons}
        	and~\ref{ass:c} converges as $R\to\infty$ to
\begin{align*}	   
	 \frac{ 
	   \sum_{\ell \in \mathcal{L}_0} k_{\ell}\left(N_{\ell}\right) 
	   \sum_{i \in \mathcal{I}_\ell} 
        \int f(x) \iden(x \in A_{\ell,i}) \, \gamma(d x)}
       {\sum_{\ell \in \mathcal{L}_0} k_{\ell}\left(N_{\ell}\right)}
        &=\frac{1}{\sum_{\ell \in \mathcal{L}_0} k_{\ell}\left(N_{\ell}\right)} 
        \sum_{\ell \in \mathcal{L}_0} k_{\ell}\left(N_{\ell}\right) 
        \int f(x)  \, \gamma(d x)\\
        & 
        =
        \int f(x)  \, \gamma(dx)
\end{align*}
%where we have implicitly used the result of Lemma~\ref{the:self-similar}.
%It is because 
%$\sum_{i \in \mathcal{I}_\ell} \hat{\gamma}^{N_{\ell,i}}_{\ell,i}$ converges weakly to $\gamma(x)$.
and thus the expectation taken with respect to $\hat{\gamma}^{\{N_{\ell,i}\}_{\ell,i}}$ converges
to the true expectation $\int f(x)  \, \gamma(dx)$.
Now as this holds for an arbitrary $f$, this implies weak convergence as required.
\end{proof}

\begin{corollary}\label{the:post} 
Let
\begin{align}		
\hat{\omega}^{\{N_{\ell,i}\}_{\ell,i}} :=
\frac{1}{\sum_{\ell=1}^{L} k_{\ell} \left(N_{\ell}\right)} 
\sum_{\ell=1}^{L} k_{\ell}\left(N_{\ell}\right) \sum_{i \in \mathcal{I}_{\ell}} \hat{\omega}^{N_{\ell,i}}_{\ell,i}.
\end{align}		
If the assumptions of Lemma~\ref{the:as} hold, then
\begin{align}
\hat{\omega}^{\{N_{\ell,i}\}_{\ell,i}} \pto \omega.
\end{align}
and
\begin{align}
\label{eq:post-est}
\hat{\pi}^{\{N_{\ell,i}\}}_{\ell,i} := \frac{\hat{\gamma}^{\{N_{\ell,i}\}_{\ell,i}}}{\hat{\omega}^{\{N_{\ell,i}\}_{\ell,i}}}
\end{align}
converges weakly to the measure $\pi(x)$ on $\X$ as $R\to\infty$.
\end{corollary}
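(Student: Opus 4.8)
The plan is to obtain both claims as essentially immediate consequences of Lemma~\ref{the:as}, so the proof should be short. For the first claim I would observe that $\hat{\omega}^{\{N_{\ell,i}\}_{\ell,i}}$ is exactly the total mass of the random measure $\hat{\gamma}^{\{N_{\ell,i}\}_{\ell,i}}$: since $\int_\X 1\,\hat{\gamma}^{N_{\ell,i}}_{\ell,i}(dx) = \frac{1}{N_{\ell,i}}\sum_{n=1}^{N_{\ell,i}} w^n_{\ell,i} = \hat{\omega}^{N_{\ell,i}}_{\ell,i}$, linearity of integration against the combined measure gives $\int_\X 1\,\hat{\gamma}^{\{N_{\ell,i}\}_{\ell,i}}(dx) = \hat{\omega}^{\{N_{\ell,i}\}_{\ell,i}}$. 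Testing the weak convergence $\hat{\gamma}^{\{N_{\ell,i}\}_{\ell,i}} \to \gamma$ from Lemma~\ref{the:as} against the (bounded, continuous) constant function $f\equiv 1$ then yields $\hat{\omega}^{\{N_{\ell,i}\}_{\ell,i}} \pto \int_\X 1\,\gamma(dx) = \omega$. Equivalently, one could rerun the proof of Lemma~\ref{the:as} with $\hat{\omega}$ in place of $\hat{\gamma}$, using that $\sum_{i\in\mathcal{I}_\ell}\int_\X\iden(x\in A_{\ell,i})\,\gamma(dx)=\omega$ because $\{A_{\ell,i}\}_{i\in\mathcal{I}_\ell}$ partitions $\X$, that the $\ell\notin\mathcal{L}_0$ terms drop out by Assumption~\ref{ass:c}, and that $\mathcal{I}_\ell$ is finite by Assumption~\ref{ass:stop}.

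For the second claim I would fix an arbitrary bounded continuous $f:\X\to\real$ and write the integral against $\hat{\pi}$ as a ratio,
\begin{align*}
\int_\X f\,d\hat{\pi}^{\{N_{\ell,i}\}}_{\ell,i} = \frac{\int_\X f\,d\hat{\gamma}^{\{N_{\ell,i}\}_{\ell,i}}}{\hat{\omega}^{\{N_{\ell,i}\}_{\ell,i}}},
\end{align*}
observe that the numerator converges in probability to $\int_\X f\,d\gamma$ by Lemma~\ref{the:as} and the denominator converges in probability to $\omega>0$ by the first part, and then invoke the continuous mapping theorem (equivalently, Slutsky's theorem) for the map $(a,b)\mapsto a/b$, which is continuous at $(\int_\X f\,d\gamma,\,\omega)$ since $\omega\neq 0$. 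This gives
\begin{align*}
\int_\X f\,d\hat{\pi}^{\{N_{\ell,i}\}}_{\ell,i} \pto \frac{\int_\X f\,d\gamma}{\omega} = \int_\X f\,d\pi = \expt_{\pi}[f],
\end{align*}
and since $f$ was an arbitrary bounded continuous function this is precisely weak convergence of $\hat{\pi}^{\{N_{\ell,i}\}}_{\ell,i}$ to $\pi$ on $\X$.

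I do not expect a serious obstacle: the substantive work is already contained in Lemma~\ref{the:as}. The only two points I would be careful about are (i) confirming that the ``weak convergence to a deterministic limit'' delivered by Lemma~\ref{the:as} does give convergence in probability of each individual test integral (so that Slutsky applies in the mode I want), which it does since the underlying Assumption~\ref{ass:cons} is phrased with $\pto$; and (ii) recording that $\omega>0$ --- true because $\omega$ is the normalizing constant of the target density --- so that division is a continuous operation at the limit point. Beyond that, the randomness of the allocations $N_{\ell,i}$ and of the combination weights $k_\ell(N_\ell)$ requires no extra handling here, as it has already been absorbed into the hypotheses and proof of Lemma~\ref{the:as}.
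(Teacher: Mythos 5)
Your proposal is correct and follows essentially the same route as the paper: the first claim is the paper's "apply the argument of Lemma~\ref{the:as} with $f(x)=1$" step, and the second is the paper's one-line appeal to Slutsky's theorem combined with Lemma~\ref{the:as}. Your explicit remarks that $\omega>0$ and that the weak convergence is to a deterministic limit (so Slutsky applies to each test integral) are sensible details the paper leaves implicit.
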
	
\begin{proof}
%It is a well known result that if $X_n\pto X$ and $Y_n \pto Y$ then $a X_n + b Y_n \pto aX + bY$.  
Using the same arguments as Lemma~\ref{the:as} with $f(x)=1$ gives
\begin{align}
\hat{\omega}^{\{N_{\ell,i}\}_{\ell,i}} &\pto
\frac{1}{\sum_{\ell \in \mathcal{L}_0} k_{\ell}\left(N_{\ell}\right)} 
\sum_{\ell \in \mathcal{L}_0} k_{\ell}\left(N_{\ell}\right) 
\sum_{i \in \mathcal{I}_{\ell}}
\int_{\X} \iden(x \in A_{\ell,i}) \gamma(dx) \\
&=\frac{1}{\sum_{\ell \in \mathcal{L}_0} k_{\ell}\left(N_{\ell}\right)} 
\sum_{\ell \in \mathcal{L}_0} k_{\ell}\left(N_{\ell}\right) 
\int_{\X} \gamma(dx) \\
&= \omega.
\end{align}
%where we have used the same arguments as Lemma~\ref{the:as} about the $k_{\ell}$.
The second result now follows directly from Slutsky's Theorem and Lemma~\ref{the:as}.
\end{proof}

These results firstly convey that if we combine convergent estimators
for the partitioned parts of the overall target, we get a convergent estimator for the target.  
Secondly, it implies that we can similarly combine a number of estimates for the target, which
come from different partitionings. For example, we can combine
a estimate $\hat{\gamma}(x)\iden(x \in A_j)$ for the trivial partition $\{A_j\}$ of $A_j$,
with that given
by combining $\hat{\gamma}(x)\iden(x \in A_{\ell_j})$ 
and $\hat{\gamma}(x)\iden(x \in A_{r_j})$ for the partitioned parts $A_{\ell_j}$ and $A_{r_j}$ where
$A_j=A_{\ell_j}\cup A_{r_j}$,
in a manner that preserves consistency, i.e. we can consistently combine parents estimates with
their children. These results hold independently of how the $k_{\ell}$ are chosen, provided Assumption~\ref{ass:c} holds.
However, the variances of the associated estimates are likely to depend heavily on the choice of $k_{\ell}$ -- we wish to place more weight on the partitionings with lower variance estimates.

A critical point is that the combination of estimators does not require any correction factor for
the number of times that an estimator and a partition were ``proposed'' -- i.e. we do not need to correct for
the fact that more computational resources are provided for some estimates than others or because some
partitions of the space are potentially larger than others.  All such potential factors either cancel out,
or are dealt with by the correct normalization of the truncated proposal.  As such, any strategy on
deciding the partitions or how often a partition is proposed only need satisfy the stated
assumptions to ensure consistency.  We are now thus ready to prove Theorem~\ref{the:est-main} from the main paper as follows, with the Theorem
itself repeated for convenience.

\it*
\begin{proof}
The proof follows using a combination of showing that
Assumptions~\ref{ass:cons},~\ref{ass:c} and~\ref{ass:stop}  are satisfied
and a recursive application
of Lemma~\ref{the:as} and Corollary~\ref{the:post}.

We start by considering $\hat{\gamma}_j (\cdot)$ and $\hat{\omega}_j$ for
a node $j$ whose children are both leaf nodes.  Here  Assumption~\ref{ass:stop}  is trivially satisfied as we have two estimates: the local parent estimate
and the combined child estimate. By construction, the combination of
 a parent node and child node estimates satisfies the partitioning requirements 
 of Assumption~\ref{ass:cons}, while by the final assumption in the theorem, we
 have the required consistency of each of the child and parent
 node estimates in isolation.  Thus Assumption~\ref{ass:cons} is also satisfied.
Assumption~\ref{ass:c} is satisfied through the assumption that each leaf node
is visited infinitely often as the budget becomes arbitrarily large and the fact
that, by construction, $c_j \to 1$ for the parent node as this happens
unless the number of samples used to construct the local parent
estimator also becomes infinitely large, in which case both estimates converge
anyway.

Lemma~\ref{the:as} now tells us that $\hat{\gamma}_j (\cdot) \to \gamma(x)
\iden (x \in A_j)$ and Corollary~\ref{the:post} tells us that
$\hat{\omega}_j \to \int_{x\in A_j} \gamma(x)dx$ and $\hat{\pi}_j(\cdot) \to
\pi(\cdot | x\in A_j)$.  We thus have the Theorem holds for leaf nodes
and all nodes whose children a both leaves.

We can now recursively apply the same logic to show that the Theorem holds
for all nodes in the tree.  Specifically, we have that a node also converges if
both its children nodes convergence, and so by induction all the nodes in the
tree must converge.
%
%
%	
%The result follows from Lemma~\ref{the:as}, Corollary~\ref{the:post}, and induction through the tree.
%Specifically, we first note that Assumption~\ref{ass:stop} guarantees that the number
%of leaf nodes in the tree is bounded.  It further 
%directly follows from Assumptions~\ref{ass:cons} and~\ref{ass:c}, Lemma~\ref{the:as}, and Corollary~\ref{the:post} that 
%the estimated measures and marginal probabilities for the leaf nodes and their immediate parent nodes
%all either converge (because they are assigned infinite
%budget) or do not affect the final estimate (because their combination weights become
%dominated by an ancestor in the tree).  
%
%By induction and
%repeated application of Lemma~\ref{the:as} and Corollary~\ref{the:post}, the estimates at all the nodes in the tree
%calculated recursively using~\eqref{eq:emp-measure-prop} and~\eqref{eq:ml-estimate} also converge or do not affect the
%final estimate.  Because the root node has no ancestors, it cannot fall into the latter
%category and so converges to the full target problem as
%required.  That the self normalized estimates at each node also converge then follows from the
%second part of Corollary~\ref{the:post}.
%
%Note that
%by choosing node weight functions $c_j$ that locally satisfy the same requirements
%as given combination weights $k_{\ell}$ in the non-hierarchal case, this trivially induces 
%valid estimate combination weights as required.
\end{proof}

\begin{remark}
This result can be trivially extended to convergence in probability, $\mathcal{L}^P$ convergence,
and almost sure convergence of the expectation estimates, given the
assumption that both the $\hat{\omega}_j$ and the corresponding unnormalized local 
expectation estimates \[
\hat{\varrho}_j := \frac{1}{N_j}\sum_{n=1}^{N_j} w_j^n f\left(\hat{x}_{j}^n\right)
\]
provide the required convergence.
This follows by simply noting that the arguments in each proof remain
equally valid for $\hat{\varrho}_j$ and for the different forms of convergence.
\end{remark}

\subsection{Discrete Variables}
\label{sec:discrete}

As explained in \S\ref{sec:app:projection}, our method for generating partitions means that they 
are not always disjoint as required by Assumption~\ref{ass:cons}, most notably when $x$ is discrete.
Fortunately, we can still deal with this case by noting that the required properties of Assumption~\ref{ass:cons}
\emph{do} hold in the space of $z_{1:T}$.  This will require no algorithmic changes, but will require additional
consideration in the proof.  In this case we replace Assumption~\ref{ass:cons} with the following
\begin{assumption}
	\label{ass:cons2}	
	Let $z_{1:T} \sim u(z_{1:T})$ be uniformly distributed on the unit hypercube $\Z_T = [0,1]^T$ and let $x = g(z_{1:T})$ have 
	density $q(x)$ and support $x \in \X$, where $q(x)$ is a valid importance sampling proposal for $\gamma(x)$ (see e.g. \citep{owen2013mc}).
	For every independent estimator set $\ell \in \{1,\dots,L\}$, we are given a) a
	partitioning $\{B_{\ell,i}\}_{i \in \mathcal{I}_\ell}$ of $\Z_T$ such that $B_{\ell,i} \cap B_{\ell,j} = \emptyset$
	for $i\neq j$ and $\bigcup_{i \in \mathcal{I}_\ell} B_{\ell,i} = \Z_T$,
	and b) a family $\{\hat{\varphi}^{N_{\ell,i}}_{\ell,i}\}_{i \in \mathcal{I}_\ell}$ of estimated
	measures on $\Z_T$ for all $N\geq 1$:
	\[
	\hat{\varphi}^{N_{\ell,i}}_{\ell,i} \left(\cdot\right) := \frac{1}{N_{\ell,i}} \sum_{n=1}^{N_{\ell,i}}
	w^n_{\ell,i} \, \delta_{\hat{z}_{1:T,\ell,i}^n} \left(\cdot\right)
	\]
	for some random variables $w^n_{\ell,i}$ and $\hat{z}_{1:T,\ell,i}^n$ such that each
	$\hat{\varphi}^{N_{\ell,i}}_{\ell,i}$ converges weakly to the following measure on $\Z_T$ as $N_{\ell,i} \rightarrow \infty$
	\[
	\frac{\gamma(g(z_{1:T})) \iden(z_{1:T} \in B_{\ell,i})u(z_{1:T})}{q(g(z_{1:T}))}.
	\]
	Further each marginal probability estimate converges in probability as follows
	\[
	\hat{\omega}^{N_{\ell,i}}_{\ell,i} := \frac{1}{N_{\ell,i}} \sum_{n=1}^{N_{\ell,i}}
	w^n_{\ell,i} 
	\pto 
	\int_{\Z_T} \frac{\gamma(g(z_{1:T})) \iden(z_{1:T} \in B_{\ell,i})}{q(g(z_{1:T}))} u(dz_{1:T}).
	\]
\end{assumption}

\begin{corollary}\label{the:as2} 
	Let $\hat{\gamma}^{N_{\ell,i}}_{\ell,i}$ denote the pushforward measure of $\hat{\varphi}^{N_{\ell,i}}_{\ell,i}$ (as per 
	$\hat{\gamma}^{N_{\ell,i}}_{\ell,i}$ in Assumption~\ref{ass:cons}), then if Assumptions~\ref{ass:cons2},~\ref{ass:c} and~\ref{ass:stop} hold,
	\begin{align} 
	\label{eq:estimator2} 
	\hat{\gamma}^{\{N_{\ell,i}\}_{\ell,i}} :=  \frac{1}{\sum_{\ell=1}^{L} k_{\ell} \left(N_{\ell}\right)} 
	\sum_{\ell=1}^{L} k_{\ell}\left(N_{\ell}\right) \sum_{i \in \mathcal{I}_{\ell}} \hat{\gamma}^{N_{\ell,i}}_{\ell,i} 
	\end{align}
	converges weakly to the measure $\gamma(x)$ on $\X$ as $R\to\infty$.
\end{corollary}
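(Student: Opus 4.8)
The plan is to prove the statement first in the reparametrized space $\Z_T$, where the partitions \emph{are} disjoint, by re-using Lemma~\ref{the:as} essentially as a black box, and then to transport the conclusion to $\X$ by pushing forward through $g$. First I would note that the proof of Lemma~\ref{the:as} uses nothing about the target $\gamma$ beyond the fact that the regions partition the ambient space; it therefore applies verbatim with $\X$ replaced by $\Z_T$, the regions $\{A_{\ell,i}\}$ replaced by $\{B_{\ell,i}\}$ (disjoint with union $\Z_T$ by Assumption~\ref{ass:cons2}), and the target replaced by the finite measure $\nu(dz_{1:T}) := \frac{\gamma(g(z_{1:T}))}{q(g(z_{1:T}))}\,u(dz_{1:T})$ on $\Z_T$ --- the sum over $i\in\mathcal{I}_\ell$ of the per-region weak limits in Assumption~\ref{ass:cons2} being exactly $\nu$ because $\sum_i \iden(z_{1:T}\in B_{\ell,i}) \equiv 1$. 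Assumptions~\ref{ass:c} and~\ref{ass:stop} are unchanged, so Lemma~\ref{the:as} would give that the $z$-space combination $\hat{\varphi}^{\{N_{\ell,i}\}_{\ell,i}} := \big(\sum_{\ell} k_\ell(N_\ell)\big)^{-1}\sum_{\ell} k_\ell(N_\ell)\sum_{i\in\mathcal{I}_\ell}\hat{\varphi}^{N_{\ell,i}}_{\ell,i}$ converges weakly to $\nu$ on $\Z_T$ as $R\to\infty$.

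Since pushforward commutes with the linear combination, by construction $\hat{\gamma}^{\{N_{\ell,i}\}_{\ell,i}}$ is the pushforward $g_\ast\hat{\varphi}^{\{N_{\ell,i}\}_{\ell,i}}$, so two things would remain: (i) to identify $g_\ast\nu$, and (ii) to check that $g$ preserves weak convergence. Step (i) is the standard importance-sampling change of variables: for bounded measurable $f:\X\to\real$,
\begin{align}
\int_{\X} f(x)\,(g_\ast\nu)(dx)
= \int_{\Z_T} f(g(z_{1:T}))\,\frac{\gamma(g(z_{1:T}))}{q(g(z_{1:T}))}\,u(dz_{1:T})
= \expt_{q(x)}\!\left[f(x)\,\frac{\gamma(x)}{q(x)}\right]
= \int_{\X} f(x)\,\gamma(dx),
\end{align}
using that $g(z_{1:T})$ has law $q$ when $z_{1:T}\sim u$; hence $g_\ast\nu$ is exactly the measure $\gamma$ on $\X$ of the statement.

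Step (ii) is where the real work is, and I expect it to be the main obstacle, since $g$ need not be continuous --- indeed it is piecewise constant precisely in the discrete-$x$ case this corollary is designed for, so the continuous mapping theorem cannot be applied naively. The fix is to observe that for bounded continuous $f:\X\to\real$ the pullback $f\circ g$ is bounded and continuous away from the discontinuity set of $g$, which is contained in the union over coordinates $t=1,\dots,T$ of the (one-dimensional) jump sets of the inverse-CDF maps of Appendix~\ref{sec:app:projection}; this union is Lebesgue-null, hence $\nu$-null since $\nu$ is absolutely continuous with respect to Lebesgue measure on $\Z_T$. Then the Portmanteau theorem turns $\hat{\varphi}^{\{N_{\ell,i}\}_{\ell,i}}\to\nu$ into $\int (f\circ g)\,d\hat{\varphi}^{\{N_{\ell,i}\}_{\ell,i}} \to \int (f\circ g)\,d\nu$, i.e.\ $\int f\,d\hat{\gamma}^{\{N_{\ell,i}\}_{\ell,i}} \to \int_\X f(x)\,\gamma(dx)$ by step (i); arbitrariness of $f$ then gives weak convergence of $\hat{\gamma}^{\{N_{\ell,i}\}_{\ell,i}}$ to $\gamma$ on $\X$. (If $g$ is continuous --- e.g.\ a continuous target with an everywhere-positive proposal --- step (ii) is trivial and the corollary becomes a purely mechanical consequence of Lemma~\ref{the:as}.)
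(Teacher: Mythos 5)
Your proof is correct and follows the same essential route as the paper's: test $\hat{\gamma}^{\{N_{\ell,i}\}_{\ell,i}}$ against a bounded continuous $f$, pull the integral back to $\Z_T$, invoke the weak limits of Assumption~\ref{ass:cons2}, and undo the change of variables $\int \frac{f(g(z_{1:T}))\gamma(g(z_{1:T}))}{q(g(z_{1:T}))}u(dz_{1:T}) = \int f(x)\,\gamma(dx)$. The only structural difference is that you first assemble the combined $z$-space measure and establish its weak convergence to $\nu$ by reusing Lemma~\ref{the:as} as a black box and then push forward once at the end, whereas the paper interleaves the two, applying the per-region weak limits of Assumption~\ref{ass:cons2} directly to the pulled-back test function $f\circ g$. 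Your step (ii) is the one place where you go beyond the paper: the paper's proof silently uses $f\circ g$ as a test function for the weak convergence of $\hat{\varphi}^{N_{\ell,i}}_{\ell,i}$, even though $f\circ g$ need not be continuous when $g$ has jumps --- which is precisely the discrete-$x$ situation this corollary exists to cover. Your patch via the mapping theorem (the discontinuity set of $g$ is contained in a Lebesgue-null union of jump sets of the conditional inverse CDFs, hence $\nu$-null because $\nu \ll u$) is the right fix and closes a small gap that the paper's own argument leaves implicit.
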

\begin{proof}
%By Assumption~\ref{ass:cons2}, we have that 
%each $\hat{\varphi}^{N_{\ell,i}}_{\ell,i} $ converges weakly
%to the measure 
%\[
%\frac{\gamma(g(z_{1:T})) \iden(z_{1:T} \in B_{\ell,i})u(dz_{1:T})}{q(g(z_{1:T}))}
%\]
%as $N_{\ell,i}$ tends to $\infty$. Thus, for each $\ell \in \mathcal{L}_0$,
%\[
%\sum_{i \in \mathcal{I}_\ell} \hat{\varphi}_{\ell,i}^{N_{\ell,i}}
%\]
%converges weakly to the measure
%\begin{align*}
%\sum_{i \in \mathcal{I}_\ell}
%\frac{\gamma(g(z_{1:T})) \iden(z_{1:T} \in B_{\ell,i})u(z_{1:T})}{q(g(z_{1:T}))}
%&=
%\frac{\gamma(g(z_{1:T}))u(dz_{1:T})}{q(g(z_{1:T}))} \sum_{i \in \mathcal{I}_\ell}
%\iden(z_{1:T} \in B_{\ell,i}) \\
%&=
%\frac{\gamma(g(z_{1:T}))u(dz_{1:T})}{q(g(z_{1:T}))}
%\end{align*} as the $R\to\infty$.  
As per Lemma~\ref{the:as}, the estimates for $\ell \notin \mathcal{L}_0$ need not converge
but do not affect the final estimate.  We again demonstrate the result by considering
an arbitrary continuous function $f : \X \to \real$ for which we have
\begin{align*}
\int f(x)\, \hat{\gamma}^{\{N_{\ell,i}\}_{\ell,i}}(d x)
&=         
\int
f(x)
\frac{1}{\sum_{\ell=1}^{L} k_{\ell}\left(N_{\ell}\right)} 
\sum_{\ell=1}^{L} k_{\ell}\left(N_{\ell}\right) \sum_{i \in \mathcal{I}_{\ell}} \hat{\gamma}^{N_{\ell,i}}_{\ell,i}(d x) 
        \\
        &
        =
        \frac{1}{\sum_{\ell=1}^{L} k_{\ell}\left(N_{\ell}\right)} 
        \sum_{\ell=1}^{L} k_{\ell}\left(N_{\ell}\right) 
        \sum_{i \in \mathcal{I}_\ell} 
        \int f(x) \hat{\gamma}^{N_{\ell,i}}_{\ell,i}(d x) \\
&
=
\frac{1}{\sum_{\ell=1}^{L} k_{\ell}\left(N_{\ell}\right)} 
\sum_{\ell=1}^{L} k_{\ell}\left(N_{\ell}\right) 
\sum_{i \in \mathcal{I}_\ell} 
\int f(g(z_{1:T})) \hat{\varphi}^{N_{\ell,i}}_{\ell,i}(dz_{1:T})
\end{align*}
which using Assumptions~\ref{ass:cons2}
and~\ref{ass:c} converges as $R\to\infty$ to
\begin{align*}
	 \frac{1}{\sum_{\ell \in \mathcal{L}_0} k_{\ell}\left(N_{\ell}\right)}
	 	&\sum_{\ell \in \mathcal{L}_0}  k_{\ell}\left(N_{\ell}\right) 
	 	\sum_{i \in \mathcal{I}_\ell}
	 	\int \frac{f(z_{1:T})\gamma(g(z_{1:T})) \iden(z_{1:T} \in B_{\ell,i})}{q(g(z_{1:T}))}u(dz_{1:T}) \\
	 		 &=\frac{1}{\sum_{\ell \in \mathcal{L}_0} k_{\ell}\left(N_{\ell}\right)}
	 		 \sum_{\ell \in \mathcal{L}_0}  k_{\ell}\left(N_{\ell}\right) 
	 		 \int \frac{f(z_{1:T})\gamma(g(z_{1:T}))}{q(g(z_{1:T}))}u(dz_{1:T}) \\
	 &  =\int \frac{f(z_{1:T})\gamma(g(z_{1:T}))}{q(g(z_{1:T}))}u(dz_{1:T}) 
	   =\int \frac{f(x)\gamma(x)}{q(x)}q(dx) 
	 =
	 \int f(x)  \, \gamma(dx)
\end{align*}
as required.
\end{proof}

Given this corollary, we can now trivially extend Theorem~\ref{the:est-main} to the setting where
Assumption~\ref{ass:cons2} holds instead of Assumption~\ref{ass:cons} using the same arguments.
%
%\section{General Parameter Spaces}
%\label{sec:app:discrete}
%\input{discrete}

\section{Setting the Child Preference Factors}
\label{sec:app:child-pref}
% !TEX root =  it_nips2018.tex

The child preference factors $c_j$ represent a relative weight given to the estimate
from the child nodes in our combined estimator.  In the absence of other information, it
would thus be natural to set $c_j = \frac{M_j-N_j}{M_j}$ where $M_j$ is the total number
of traversals (including running inference at the parent) and $N_j$ is the number of times
inference has been run at the parent node, such that the estimates are weighted
in proportion to the number of component samples.  However, we also expect the \emph{per-sample efficiency}
of the child estimate to be better than the parent because of the adaptation provided
by the inference tree.  Therefore, we want to give more preference to the child estimates.
To do this, we employ the simple, but effective, heuristic of scaling the number of child traversals 
as follows
\begin{align}
	c_j = \frac{\lambda^{(\expt[d_{\text{ch}}]-d_j)} (M_j-N_j)}{N_j + \lambda^{(\expt[d_{\text{ch}}]-d_j)} (M_j-N_j)}
\end{align}
where $d_j$ is the depth of node $j$ in the tree and $\expt[d_{\text{ch}}]$ is the average depth
of the child subtrees.  Here is $\lambda \in [1,\infty)$ is preference parameters and can be interpreted
as how many times more efficient we expect the $d_{j+1}$-th layer to be than the $d_j$-th layer.
We use $\lambda = 1.2$ as a default.  In the context of the notation of the main paper we thus have
\begin{align}
c_j = \frac{M_j-N_j}{M_j}\chi_j \quad \text{where} \quad \chi_j = \frac{\lambda^{(\expt[d_{\text{ch}}]-d_j)} M_j}{N_j + \lambda^{(\expt[d_{\text{ch}}]-d_j)} (M_j-N_j)}
\end{align}
is our correction factor.

\section{Estimates for Empirical Variance and Effective Sample Size}
\label{sec:app:ess}
% !TEX root =  it_nips2018.tex

When calculating terms such as the effective sample size (ESS)~\citep{owen2013mc}, we need to
take care about the fact that our traversal strategy implies additional implicit weights through the $N_j$ and
$c_j$.  In short, our ``expected squared weight'' should not be simply calculated using $\frac{1}{N_j} \sum_{n=1}^{N} (w_i^n)^2$
but instead using the scheme we now introduce.  Given this expected squared weight estimator, a number
of useful estimators will follow naturally.

We start by introducing an alternative formulation of the combined marginal likelihood estimate of
a node as follows
\begin{align}
\hat{\omega}_j &= (1-c_j) \frac{1}{N_j} \sum_{n=1}^{N_j} w_j^n + c_j \left(\hat{\omega}_{\ell_j}+\hat{\omega}_{r_j}\right)
= \frac{1}{M_j} \sum_{m=1}^{M_j} w_m k_m \frac{M_j}{N_{j(m)}}
\end{align}
where $M_j$ is the number of times the node has been traversed, $\{w_m\}_{m=1}^{M_j}$ is the union of all the weights
from the current node and its decedents, $k_m$ is a child preference weight associated with sample $m$ (e.g. $(1-c_j)$
for a sample form the current node local estimate, $c_j(1-c_j)$ for a sample from the local estimate of a child if that node is an internal node, etc.),
and $N_{j(m)}$ is the number of samples that have been generated locally at the node that generated sample $m$.
We thus see that the true sample weights in our combined estimator are $w_m k_m M_j/N_{j(m)}$ and so our
estimator for the squared weight is
\begin{align}
\hat{\zeta}^2_j :=& \frac{1}{M_j} \sum_{m=1}^{M_j} \left(w_m k_m \frac{M_j}{N_{j(m)}}\right)^2
= M_j \sum_{m=1}^{M_j} \left(\frac{w_m k_m}{N_{j(m)}}\right)^2 \nonumber \\
=& M_j \left((1-c_j)^2 \frac{1}{N_j^2} \sum_{n=1}^{N_j} \left(w_j^n\right)^2 
+ c_j^2 \left(\frac{\hat{\zeta}_{\ell_j}^2}{M_{\ell_j}}+ \frac{\hat{\zeta}_{r_j}^2}{M_{r_j}}\right)\right).
\end{align}
$\hat{\zeta}_{j}^2/M_{j}$ can be propagated in a similar fashion to other
estimates, allowing $\hat{\zeta}^2_j$ to be estimated at any node.

Given $\hat{\zeta}^2_j$, we can straightforwardly construct various useful estimators.  For example, 
the Monte Carlo estimator for the variance of the weight produced by a given traversal is given by
\begin{align}
\label{eq:sigma}
\sigma_j^2 := \frac{M_j}{M_j-1} \left(\hat{\zeta}^2_j-\hat{\omega}^2_j\right)
\end{align}
where the first term is Bessel's correction.  The ESS, on the other hand, is
\begin{align}
\label{eq:ess}
\text{ESS}_j := \frac{M_j\hat{\omega}^2_j}{\hat{\zeta}^2_j}.
\end{align}

\section{Derivation of the Pure-Exploitation Target}
\label{sec:app:exploit-target}
% !TEX root =  it_nips2018.tex

For this derivation, it will be convenient to first consider the case where the children we are deciding between
are both leaf nodes and that there is some arbitrary (unknown) target function $f$, such that combined child estimate
(not including the parent) is given by
\begin{align}
\label{eq:ch-comb}
\hat{\mu}_{\text{ch}} := \frac{1}{N_{\ell}} \sum_{n=1}^{N_{\ell}} w_{\ell}^n \hat{f}_{\ell}^n + 
\frac{1}{N_{r}} \sum_{n=1}^{N_{r}} w_{r}^n \hat{f}_{r}^n 
\end{align}
where $\hat{f}_{i}^n := f(\hat{x}_{i}^n)$.  Now the mean squared error (MSE) of our
estimator decomposes in the standard manner \[
\expt [(\hat{\mu}_{\text{ch}}-\mu_j)^2] = 
\var[\hat{\mu}_{\text{ch}}]+(\expt [(\hat{\mu}_{\text{ch}}-\mu_j)])^2
\]
where the second term is the biased squared and all terms are implicitly conditioned on $N_{\ell}$
and $N_r$.  Though the finite sample bias of our estimator
is difficult to assert, we know that it vanishes as $N_{\ell},N_r \to \infty$ and, due to the
central limit theorem, we can safely assume this happens faster than the standard deviation vanishes.
Thus asymptotically, we only need to consider the variance to minimize the MSE.  Now, invoking 
the conditional independence given $N_{\ell}$ and $N_r$ of each child estimator and each sample within those
estimators, we have 
\begin{align*}
\var[\hat{\mu}_{\text{ch}}] &= \var\left[\frac{1}{N_{\ell}} \sum_{n=1}^{N_{\ell}} w_{\ell}^n \hat{f}_{\ell}^n\right] + 
\var \left[\frac{1}{N_{r}} \sum_{n=1}^{N_{r}} w_{r}^n \hat{f}_{r}^n\right] \\
&= \frac{1}{N_{\ell}}\var\left[w_{\ell}^1 \hat{f}_{\ell}^1\right]
+\frac{1}{N_r}\var\left[w_{\ell}^1 \hat{f}_{\ell}^1\right].
\end{align*}
Using the stratified sampling results of, for example,~\citep{carpentier2015adaptive}, it
is straightforward to show that the subsequent optimal strategy is to set
\begin{align*}
N_{\ell} \propto \sqrt{\var\left[w_{\ell}^1 \hat{f}_{\ell}^1\right]}
\quad \text{and} \quad
N_{r} \propto \sqrt{\var\left[w_{r}^1 \hat{f}_{r}^1\right]}.
\end{align*}
Now assuming that the weights and evaluations are independent (remembering that we are considering an arbitrary $f$) we have
\begin{align*}
\var\left[w_{\ell}^1 \hat{f}_{\ell}^1\right]
&=\expt\left[\left(\hat{f}_{\ell}^1\right)^2\right]\expt\left[\left(w_{\ell}^1\right)^2\right]-
\left(\expt\left[\hat{f}_{\ell}^1\right]\right)^2\left(\expt\left[w_{\ell}^1\right]\right)^2\\
&= \expt\left[\left(\hat{f}_{\ell}^1\right)^2\right]\var\left[w_{\ell}^1\right]
+\var\left[\hat{f}_{\ell}^1\right]\left(\expt\left[w_{\ell}^1\right]\right)^2 \\
&= \var\left[\hat{f}_{\ell}^1\right] \left(\var\left[w_{\ell}^1\right]
\left(1+\frac{\left(\expt\left[\hat{f}_{\ell}^1\right]\right)^2}{\var\left[\hat{f}_{\ell}^1\right]}\right)
+\left(\expt\left[w_{\ell}^1\right]\right)^2
\right)
\end{align*}
and similarly for $\var\left[w_{r}^1 \hat{f}_{r}^1\right]$.  We thus
have that the optimal strategy is to set (using $\sigma$ to denote standard deviation)
\begin{align}
\begin{split}
 N_{\ell} \propto \frac{\sigma \left[\hat{f}_{\ell}^1\right]}{\sigma \left[\hat{f}_{\ell}^1\right]
	+\sigma \left[\hat{f}_{r}^1\right]} 
\sqrt{\left(\var\left[w_{\ell}^1\right]
	\left(1+\frac{\left(\expt\left[\hat{f}_{\ell}^1\right]\right)^2}{\var\left[\hat{f}_{\ell}^1\right]}\right)
	+\left(\expt\left[w_{\ell}^1\right]\right)^2
	\right)}
\end{split}.
\end{align}
Here the first term depends only on the unknown target function.  Though one might want to potentially
postulate a particular dependence of $\sigma \left[\hat{f}_{\ell}^1\right]$ on the relative volume
of the nodes, we will just presume the ratio is unknown and conservatively set it to $1$, falling in line
with standard approaches for Bayesian inference where we aim to sample in proportion to the posterior, rather
than artificially producing more samples in larger areas of the space to account for the potential
of higher variation in the target function.

The second term depends only on statistics of the sample weights and the ratio 
$\left(\expt\left[\hat{f}_{\ell}^1\right]\right)^2 / \var\left[\hat{f}_{\ell}^1\right]$.  As $f$ is 
unknown, we also do not know this ratio.  However, we do know it must vary between $0$ (when $\expt\left[\hat{f}_{\ell}^1\right]=0$
or $\var\left[\hat{f}_{\ell}^1\right]\rightarrow\infty$)
and $\infty$ (when $\var\left[\hat{f}_{\ell}^1\right]=0$, i.e. the function is flat).
These two respective extremes give
\[
N_{\ell} \propto \sqrt{\left(\var\left[w_{\ell}^1\right]
	+\left(\expt\left[w_{\ell}^1\right]\right)^2\right)} = \sqrt{\expt [\left(w_{\ell}^1\right)^2]} \quad \text{and} \quad 
N_{\ell} \propto \sigma [w_{\ell}^1].
\]  
The latter of these corresponds to the optimal strategy for estimating the marginal
likelihood,  as would be expected from considering the stratified sampling results of~\cite{carpentier2015adaptive}
applied to estimating $\expt[w]$.  However, this strategy gives no consideration of the need to
produce samples from areas of high posterior density to capture possible variations in the target function
and so is highly inappropriate.  Assuming the former extreme is more conservative and spends time sampling in regions
of high probability mass and also those of high weight uncertainty.

Rather than taking a particular extreme, we treat 
\[
\kappa := \frac{\left(\expt\left[\hat{f}_{\ell}^1\right]\right)^2}{\var\left[\hat{f}_{\ell}^1\right]}, \quad \kappa \in [0,\infty]
\]
explicitly as a parameter of the traversal algorithm, where higher values of $\kappa$
give more emphasis to estimating the marginal likelihood and to accurate prediction of expectations
of smoothly varying functions, while lower values of $\kappa$ give more emphasis to sampling regions
in proportion to their marginal probabilities.  We note the interesting, and perhaps counter-intuitive,
result that even when $\kappa$ is its
minimum possible value, the optimal traversal strategy is still not to sample in proportion to marginal
probability, except in the special case where the variance of the weights is zero.  

Thus far we have omitted the fact that we eventually want a normalized estimator.  We deal with the former
by noting that we intend to \emph{separately} propagate the unnormalized estimate and the marginal
likelihood estimate.  Thus, except at the root node, our aim is to propagate low variance estimates
of both, rather than simply low variance estimates of the ratio.  Though we do not do further analysis
to assess this, we choose by default to set $\kappa=1$, to reflect
the fact that we thus always explicitly care about the marginal likelihood estimate.

We have also thus far omitted the fact that we need to calculate traversal strategies when the 
children are not leaves.  Here we can use the same analysis but need to replace $\expt\left[w_{\ell}^1\right]$
and $\var\left[w_{\ell}^1\right]$ with appropriate combined estimators.  For the former, we
can simply use $\hat{\omega}_{\ell}$.  For the latter, we need a notion of a ``single-traversal'' variance in the
marginal likelihood estimate.  Such a metric was derived as $\hat{\sigma}_{\ell}^2$ in Appendix~\ref{sec:app:ess}.
We thus arrive at our derivation of the unnormalized exploitation reward of node $\ell$ as
\begin{align}
\label{eq:app:exploit-reward}
\hat{\tau}_{\ell} := \sqrt{\hat{\omega}_{\ell}^2+(1+\kappa) \hat{\sigma}_{\ell}^2}.
\end{align}

\section{Additional Density Estimation Details}
\label{sec:app:den}
% !TEX root =  it_nips2018.tex

\subsection{Additional Heuristics}
\label{sec:app:true-lik}

Even though we cannot calculate it, we know that there is maximum possible log weight for
each node, namely 
\[
\log w_j^* = \max_{z_{1:T} \in B_j} \log \gamma(g(z_{1:T})) + \log \lVert B_j \rVert - \log q(g(z_{1:T})).
\]
Consequently, our density estimator (which is defined on the full real line)
will typically slightly overestimate the probability of a sampling falling above the threshold.  In particular, if there
is a large number of samples at the node and we are only using a simple density estimator for $\psi$, we may
continue to except to exceed the threshold even when previous samples suggest a saturation below the threshold.

Let $\text{e}(T)$ to denote the event $\left\{\max(w_j^{1:T})>w_{\text{th}}\right\}$, i.e. the event that one of $T$ independent samples
exceeds the threshold if we draw $T$ samples.  We now have
$P(\text{e}(T)) = 1-(1-\Psi(\log w_{\text{th}}))^T$. We
can further condition this on the event that we have not already seen the threshold exceeded
using the likelihood
$P(\neg \text{e}(N_j) | \text{e}(T))$. To define this, we introduce
 an additional parameter
$\log w_{\text{gap}}$ and define our likelihood to condition on the fact that none of our $N_j$ samples
fall above $\log w_{\text{th}}$ with $\Psi$ truncated at 
$\log w_{\text{tr}}:=\log w_{\text{th}}+\log w_{\text{gap}}$ to reflect the fact
that the true log weights are bounded, giving
\begin{align*}
P(\neg \text{e}(N_j) | \text{e}(T)) = \left(\frac{\Psi(\log w_{tr})-\Psi(\log w_{\text{th}})}{\Psi(\log w_{tr})}\right)^{N_j},
\end{align*}
with Bayes' rule in turn yielding
\begin{align}
P(\text{e}(T) | \neg \text{e}(N_j))
= \frac{\left(1-(1-\Psi(\log w_{\text{th}}))^T\right)P(\neg \text{e}(N_j) | \text{e}(T))}
{\left(1-(1-\Psi(\log w_{\text{th}}))^T\right)P(\neg \text{e}(N_j) | \text{e}(T)) + (1-\Psi(\log w_{\text{th}}))^T}.
\end{align}
The full definition of $\hat{p}^s_j$ actually used is then given by
\begin{align}
\hat{p}^s_j := \left(1-c_j\right)\frac{P(\text{e}(T) | \neg \text{e}(N_j))}{\text{ESS}_j}
+c_j \left(\hat{p}^s_{\ell_j}+\hat{p}^s_{r_j}-\hat{p}^s_{\ell_j}\hat{p}^s_{r_j}\right).
\end{align} 

\subsection{Additional Intuition and Parameters}
\label{eq:app:int}

At first it might seem counter intuitive to include an ESS scaling term in $\hat{p}^s_j$ as
a classic failure case for the ESS as a performance metric is if there multiple modes.  However,
the scenario where the \emph{local} estimate has a high ESS and multiple modes is expected
to be rare.  Instead, one will typically have a low local ESS for any node with multiple
modes but it may have children with a high ESS giving it a high combined ESS estimate.  In these
cases, the combined significant probability estimate $\hat{p}^s_j$ should still be high if there
is any descendant $i$ with a high $\hat{p}^s_i$ and a low $\text{ESS}_i$.  Thus in practice, scaling by the
ESS does not cause the high nodes in the tree to miss multiple mode cases, while providing
a more reliable metric for nodes low down in the tree.

In our approach, $T$ and $\log w_{\text{gap}}$ constitute fixed parameters which we set to
$1000$ and $10$ respectively as default.  On the other hand, $w_{\text{th}}$ naturally
needs to change as the training progresses.  We make the simple choice of setting $w_{\text{th}}$
to being the highest weight generated at any node, scaled to adjust for differences $\lVert B_j \rVert$
makes to the weight. An unfortunate feature of this choice is that whenever
the MAP estimate changes, the $\hat{p}^s_j$ for all nodes must be updated.  However, the regularity
that this occurs diminishes with the number of iterations, such that it, in practice, does not
lead to an increasing per-iteration computational cost as the tree is run longer.

\section{Additional Details on Refinement Strategy}
\label{sec:app:refine}
% !TEX root =  it_nips2018.tex

To define our entropy metric more precisely, recall that the entropy of a continuous uniform distribution $U(s_1, s_2)$ is
\begin{align}
\textsc{Entropy}(U(s_1,s_2)) 
&= - \int_{s_1}^{s_2} {q(z_t)\log{q(z_t)}}\,\mathrm{d}z_t= \ln(s_2 - s_1).
\end{align}
Assume that we propose a split at a point $s \in (s_1, s_2)$,
and that we will later go to the left of this split with 
a probability $P_{\ell}$ and to the right with
a probability $P_r = 1 - P_{\ell}$. This splitting and the traversal strategy
give rise to a proposal of a mixture of two uniform distributions
that has the following density
\begin{align}
q_s(z_t) =
\begin{cases}
d_{\ell}          & \text{if } z_t < s, \, \text{where} \, d_{\ell} = \frac{P_{\ell}}{s-s_{1}}\\
d_{r}          & \text{otherwise}, \, \text{where} \, d_{r} = \frac{1- P_{\ell}}{s_{2}-s}
\end{cases}
\end{align}
% Need to do some checks for all of this and in the code
The entropy of this proposal is:  
\begin{align}
\textsc{Entropy}(p_a) 
& = - \int_{s_1}^s d_{\ell}\log{d_{\ell}}\,\mathrm{d}z_t - \int_{s}^{s_2} d_r\log{d_r}\,\mathrm{d}z_t = P_{\ell} \log\frac{1}{d_{\ell}} + P_r\log\frac{1}{d_r}\, .
\end{align}
%In our case ${d_{\ell} \approx \frac{\hat{\omega}_{\ell}(z_t)}{s} }$, $\hat{\omega}_{\ell}(z_t) = \sum_{i =1}^{N} \mathbb{I} (z_{t,i} \in B_{\ell})w_i$ and the entropy of $node_j$ with split $s$ is 
We can now use our empirical estimates $\hat{\omega}_{\ell} = \frac{1}{N}\sum_{i =1}^{N} \mathbb{I} (z_{t,i} \in B_{\ell})w_i $ and similarly $\hat{\omega}_{r} = \frac{1}{N}\sum_{i =1}^{N} \mathbb{I} (x_i \notin B_{\ell})w_i$ 
to define our entropy metric as
\begin{align*}
\textsc{Entropy}(p_s) & 
= 
\hat{P}_{\ell} \log\frac{s-s_1}{\hat{P}_{\ell}} + 
\hat{P}_r \log\frac{s_2-s}{\hat{P}_r}\, \quad
\text{where} \quad \hat{P}_{\ell} = \frac{\hat{\omega}_{\ell}}{\hat{\omega}_{\ell}+\hat{\omega}_{r}} \quad \text{and} \quad
\hat{P}_{r} = \frac{\hat{\omega}_{r}}{\hat{\omega}_{\ell}+\hat{\omega}_{r}},
\end{align*}
which is trivially equivalent to the loss given in~\eqref{eq:split-criterion} up to
a normalization constant.
We then choose the split $s^* =  \operatorname{arg\,min}_s \textsc{Entropy}(p_s)$ where the
minimization is over our randomly sampled candidate splits.

As a minor additional heuristic aimed at avoiding splits where a small but significant proportion
of the tail is contained within one child, we do not in practice use $s^*$ directly, instead reducing
the size of the child with lower probability mass by 25\%.

After choosing the best split among all candidates and separating the space in to $B_{\ell}$ and $B_r$, we run inference restricted to $B_{\ell}$ and $B_r$ separately. Then we compare the empirical estimates of the marginal likelihood for each child using a t-test, which shows how likely the results are samples from two different distributions. If the p-value is small, it suggests the split is meaningful. In that case, we accept the split, creating two new child nodes and converting the current leaf node to a discriminant note. Otherwise, we discard the split and combine the samples, adding them to the estimate of the current node. When the node is revisited, new splits are suggested and the process continues in the same way.

\section{Additional Experimental Details}
\label{sec:app:exp}
% !TEX root =  it_nips2018.tex

\subsection{Gaussian Mixture Model}

For the GMM experiment, the IT parameters were set as
$\kappa = 1$, $\beta = 0.1$, and $c_j$ as per Appendix~\ref{sec:app:child-pref} with $\lambda$.  Denoting $\rho$ as the
proportion of total iterations run thus far, the annealing
parameters were given schedules of $\delta(\rho) = \frac{1}{2} \left(1+\tanh \left(20\left(0.9-\rho\right)\right)\right)$
and $\alpha(\rho) = \frac{5}{8} \left(1+\tanh \left(25\left(0.95-\rho\right)\right)\right)$.
We further fixed $\beta=0$ for the last 25\% of the iterations to reflect the fact that, because
we are carrying out inference rather than optimization, we want to spend part of our sample
budget more directly exploiting the learned tree.

Our main baseline, PI-MAIS~\citep{martino2017layered} is a state-of-the-art
adaptive importance sampling algorithm that runs a number of independent MCMC chains targeting the joint distribution
and then uses the locations of these chains to, at each iteration, construct a mixture of Gaussian proposal 
distribution, with each component centered on the location of one of the chains.  We used $N=100$ such
chains and proposed $M=15$ samples from each chain at each iteration, noting that the algorithm requires 
$N(M+1)$ target evaluations.  We further used an random walk kernel with covariance $0.0001 {I}$ 
for each of the MCMC chains, while each proposal component is taken as an isotropic Gaussian with 
covariance $0.01{I}$.

\subsection{Chaotic Dynamics Model}

This model comprises of an extended Kalman
filter defined as 
\begin{align*} 
x_0 &\sim  \, \mathcal{N} \left(0, {I}\right) \\ 
f_t (x_{t}|x_{t-1}) &=  \, A \left(x_{t-1}, \theta\right)+\upsilon_{t-1}, \quad \upsilon_{t-1} \sim \mathcal{N} \left(0, 0.01 I\right) \\ 
g_t (y_{t} | x_{t}) &=  \, C x_{t}+\varepsilon_{t}, \quad \varepsilon_{t} \sim \mathcal{N} \left(0, 0.2 I\right). 
\end{align*}
where $C$ is a known $K \times 3$ matrix. The transition function $A \left(\cdot,\theta\right)$ dictates 
the underlying dynamics with parameters
$\theta$.  We will assume that the form of $A$ is known but not the parameters.  Namely, we consider
the example where the dynamics correspond to the Pickover attractor defined as
\begin{align*} 
x_{t,1} = & \; \sin \left(b x_{t-1,2}\right)-\cos\left(a x_{t-1,1}\right)x_{t-1,3} \\ 
x_{t,2} = & \; \sin \left(d x_{t-1,1}\right)x_{t-1,3}-\cos\left(c x_{t-1,2}\right) \\ 
x_{t,3} = & \; \sin \left(x_{t-1,1}\right)
\end{align*}
where $\theta = (a, b, c, d)$.  We finish the model by defining
the prior on each dynamics parameter to be a uniform over $[-\pi,\pi]$.  A synthetic dataset
$y_{1:200}$ was generated by fixing $b=-2.3$, $a=2.5$, $d=-1.5$,  $c=1.25$, $K=20$,
and drawing each column of $C$ from a symmetric Dirichlet distribution with concentration $0.1$.

Our main baseline method was PMMH~\cite{andrieu2010particle}, a pseudo-marginal method where one runs an MCMC sampler
targeting $p(\theta | y_{1:T})$ but with the likelihood evaluation in the MH acceptance
step replaced with the unbiased ML estimate produced by an SMC sweep.
For this, we use isotropic random walk proposal with a covariance of $0.0004 I$.  For the
SMC sweeps, we used 500 particles and the bootstrap proposal.

For this experiment, the same IT parameters were used as the GMM experiment, with the exception that
we changed the annealing schedules to match the lower number
of iterations, setting $\delta(\rho) = \frac{1}{2} \left(1+\tanh \left(4\left(0.7-\rho\right)\right)\right)$
and $\delta(\alpha) = \frac{5}{8} \left(1+\tanh \left(10\left(0.8-\rho\right)\right)\right)$.

\section{PMMH Sample Paths}
\label{sec:app:pmmh}
% !TEX root =  it_nips2018.tex

\begin{figure*}[t!]
	\centering
	\includegraphics[width=0.45\textwidth]{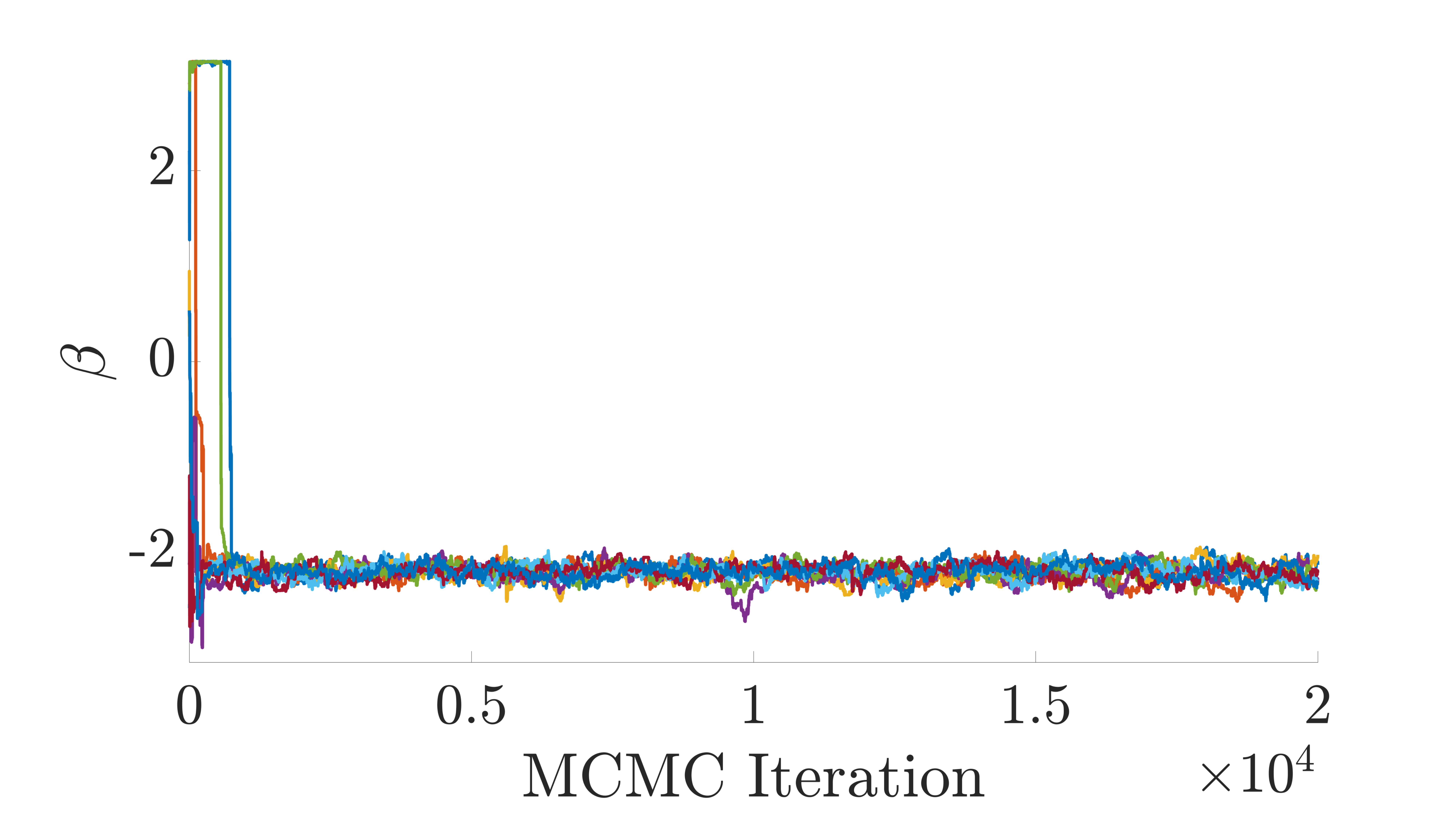} ~~~ 
	\includegraphics[width=0.45\textwidth]{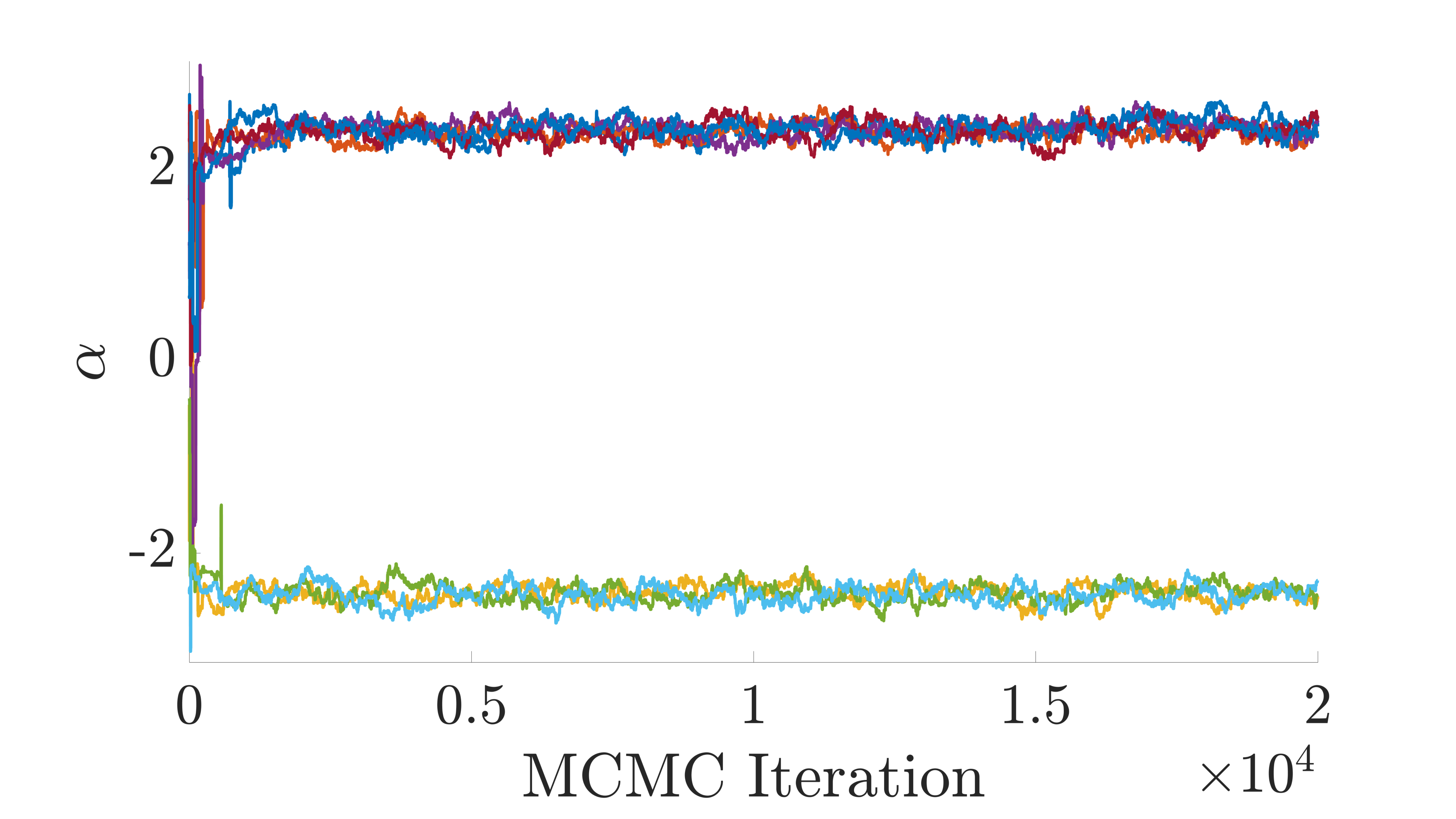} \\
	\includegraphics[width=0.45\textwidth]{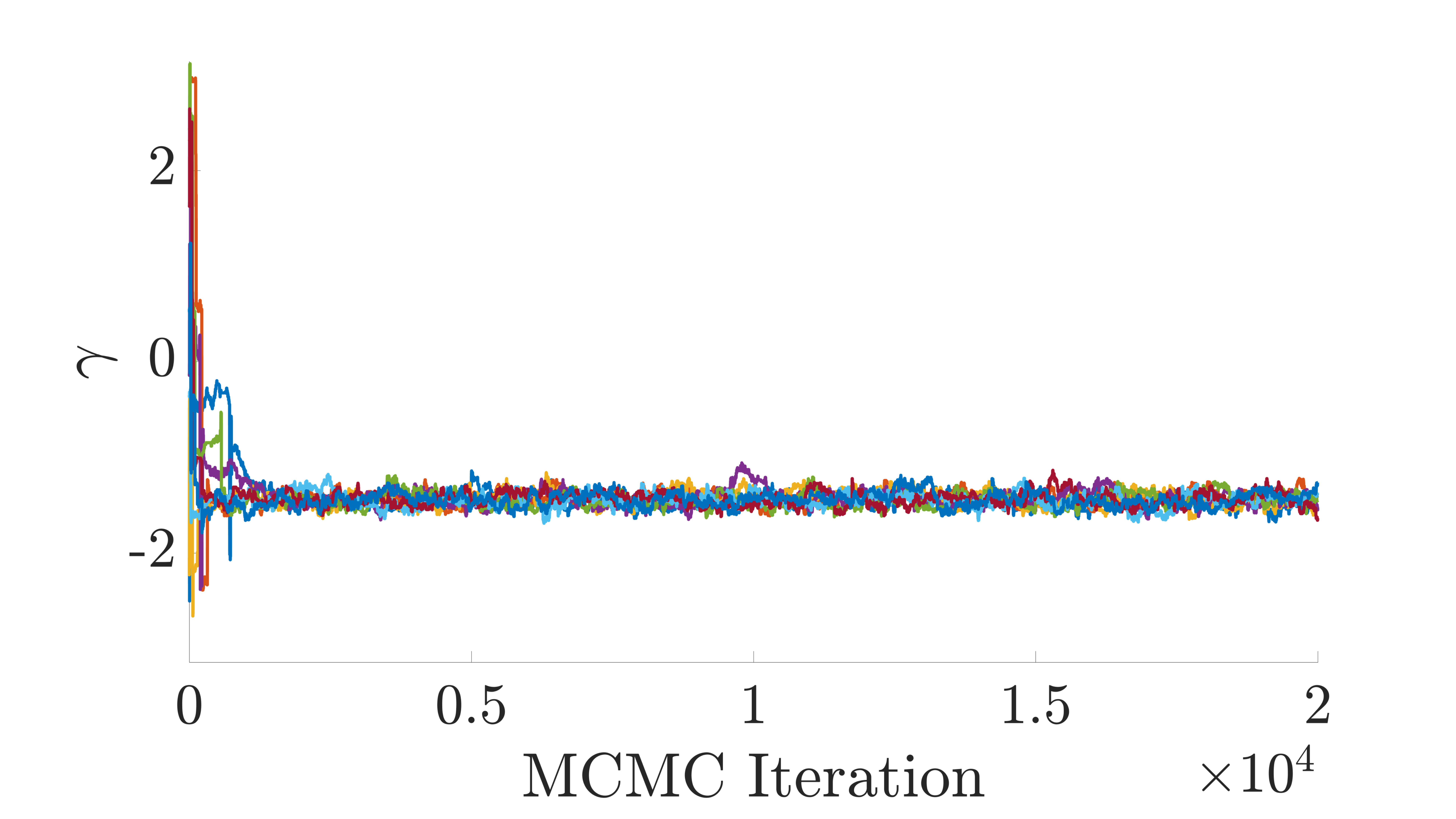} ~~~ 
	\includegraphics[width=0.45\textwidth]{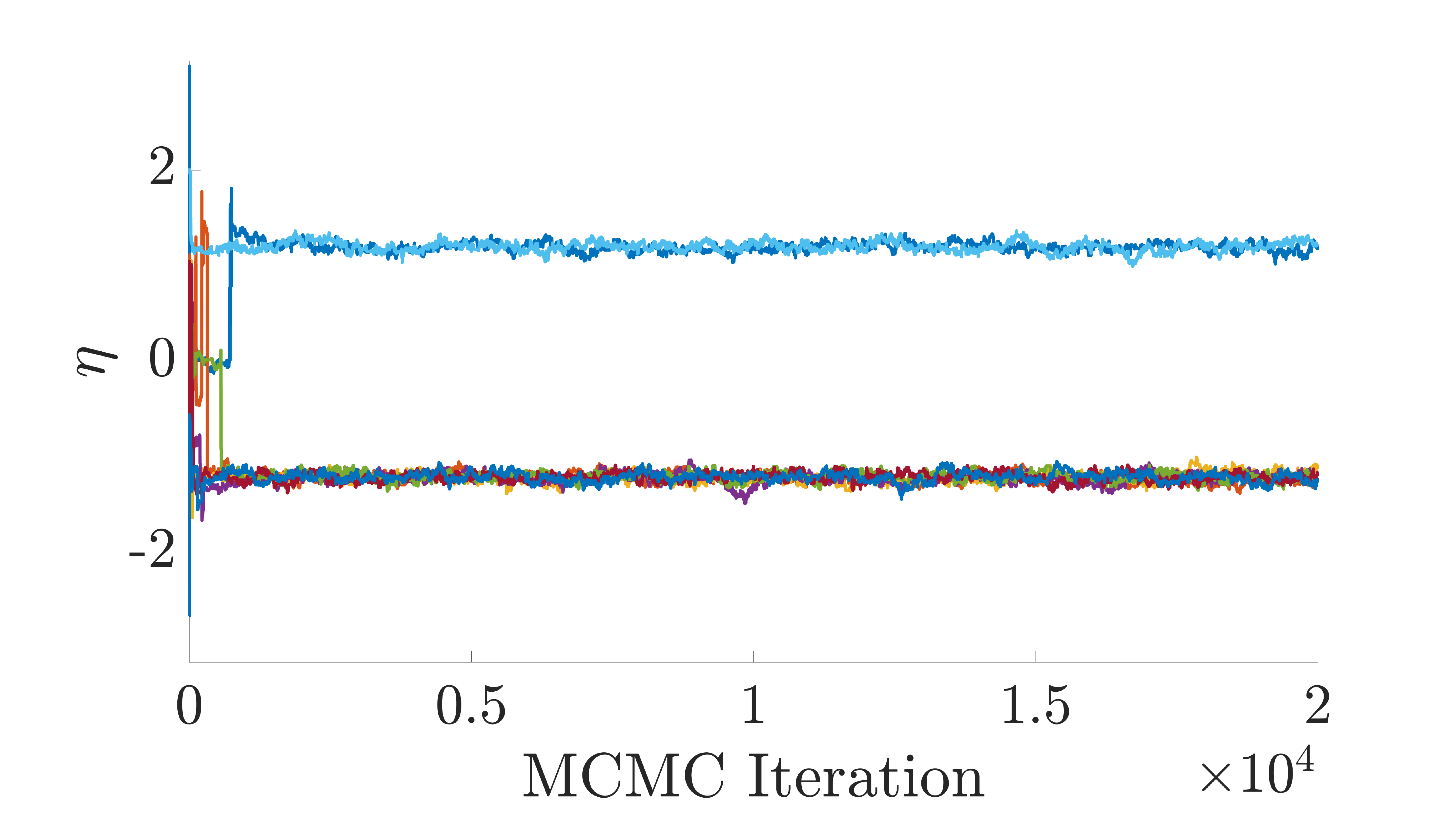} \\	
	\caption{Sample paths of PMMH.
		\label{fig:pmmh_eta}}
\end{figure*}

In the main paper, we only showed results from a single run of PMMH.
To demonstrate that PMMH fails to move between modes in any of the runs, we now
plot the individual sample paths as shown in Figure~\ref{fig:pmmh_eta}.  We see
that for the parameters with multiple modes, $\alpha$ and $\eta$, the PMMH sampler never
moves between the modes.  Thus in all runs we see PMMH was only able to pick
up a single mode.

\section{ITs for Integration}	
\label{sec:app:det}
% !TEX root =  it_nips2018.tex

Most adaptive sample schemes only look to approximate the posterior in the most
accurate way, ignoring the fact that there might be a known function $f$ which we 
are trying to estimate the expectation of, namely $\expt_{\pi(x)} \left[f(x)\right]$.
Clearly, this is inferior when $f$ is known, as it ignores the fact that $f$ may have higher
variability in some regions than others, such that the accuracy in those regions
is more impactful on the error in the overall estimate.  As well as being used as an
adaptive inference algorithm, ITs are also capable of operating in this integration setting
as we now demonstrate.

The integration setting for ITs varies primarily in the traversal strategy.  In
Section~\ref{sec:app:exploit-target}, we indirectly showed that the optimal exploitation strategy
for the known $f$ case is 
\[
N_{\ell} \propto \sqrt{\var\left[w_{\ell}^1 \hat{f}_{\ell}^1\right]}
\quad \text{and} \quad
N_{r} \propto \sqrt{\var\left[w_{r}^1 \hat{f}_{r}^1\right]},
\]
a result that has been previously noted by, for example,~\cite{carpentier2015adaptive} in
the stratified sampling literature.  Unlike where $f$ is unknown, 
$\var\left[w_{\ell}^1 \hat{f}_{\ell}^1\right]$ here is a term we can directly
estimate in the same way as $\sigma_j^2$ (see Section~\ref{sec:app:ess}).  
Defining $\hat{s}_j^2$ as the equivalent of $\hat{\sigma}_j^2$ when replacing the
weights with $w_j^n \hat{f}_{j}^n$, this gives that exploitation target
is simply
\begin{align}
\tau_j^{\text{int}} = \hat{s}_j.
\end{align}
Unfortunately, our exploration strategy using density estimation does not translate
so simply to the integration setting.  We thus leave developing an analogous approach
to future work, and simply set
\begin{align}
u_{\ell_j} = \frac{1}{M_{\ell_j}}\left(\left(\frac{s_{\ell_j}}{s_j}\right)^{(1-\alpha)}
+ \beta \frac{\lVert B_{\ell_j} \rVert}{\lVert B_{j} \rVert} \frac{\log M_j}{\sqrt{M_{\ell_j}}}\right).
\end{align}
This target is now analogous to that discussed in~\cite{carpentier2015adaptive}
and so their regret analysis should still apply.  

To demonstrate that IT are still useful in this integration setting even without
a principled exploration term in the traversal target, we conducted an experiment based on a network model.  Here our network
\begin{wrapfigure}{r}{0.5\textwidth}
	\centering
	\vspace{-10pt}
	\includegraphics[width=0.5\textwidth]{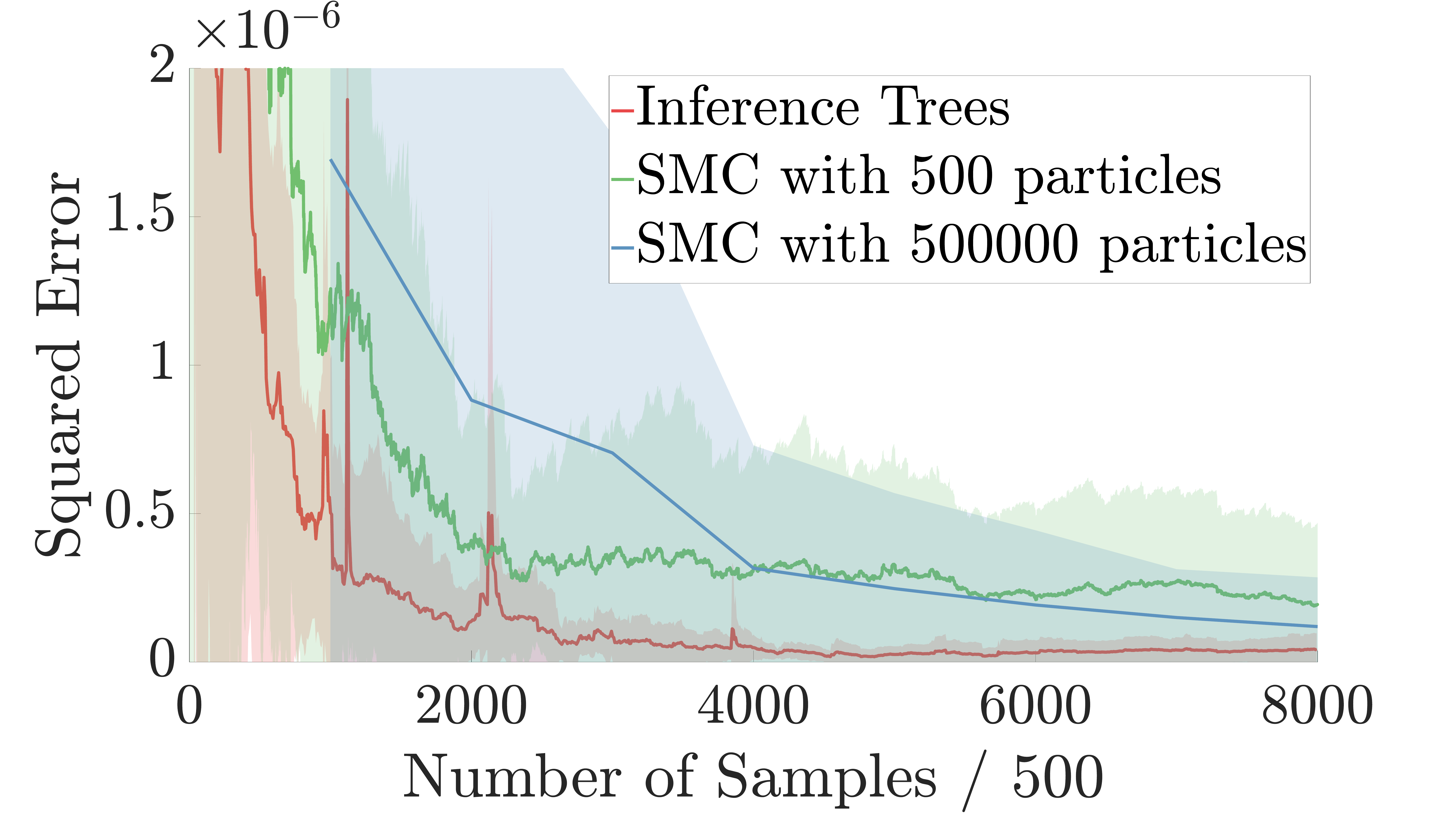}%
	\vspace{-6pt}
	\caption{Convergence of ITs on network model.  Solid
		lines show mean over 10 runs, with shaded region showing $\pm$ one standard
		deviation.	\vspace{-10pt}
		\label{fig:exp-plots}}
\end{wrapfigure}
 has weighted edges and we wish to estimate 
if the shortest path between two points exceeds a threshold. One possible application of such models would be in
modeling a traffic network, where the edges are streets connecting two points and the weights correspond to the commuting times on different edges
which are stochastic due to traffic levels and correlated because of the proximity of different streets to one another. 
%Congestion between two points will happen when the minimum commuting time is higher than a specific value.   
We thus assume that there are noisy, correlated, observations for the edges weights, requiring inference,
while our threshold function means we are in a ``known $f$'' scenario, namely
we are estimating a form of tail integral.

The model is formally defined as
\begin{align}
x_{1:T} &\sim \mathcal{N} (x_{1:T} ; \mu, \Sigma) \\
%x_0 \sim & \mathcal{N} \left(\mu_0, \sigma_0 I\right) \\ 
%x_{t} | x_{t-1} \sim & \mathcal{N} \left(x_{t-1}, \sigma_{t} I\right) \\ 
%\mathbf{x}_{1:t} \sim & \boldsymbol{\mathcal{N}} \left(\boldsymbol{\mu}_{1:t}, \boldsymbol{\Sigma}_{1:t} \right) \\ 
y_{t} | x_{t}  &\sim \textsc{Student-T} \left(\frac{y_t-x_t}{\sigma} ; \nu \right) \quad \forall t \in \{1,\dots,T\}
\end{align}
where $x_t$ represents the unknown weights of edges, $y_t$ are noisy observations of those weights, 
and $\mu, \Sigma, \nu$, and $\sigma$ are known fixed parameters. Synthetic data was generated by setting $T=10$,
$\mu = [3,\dots,3]$, $\Sigma=I$, $\sigma=0.1$, and $\nu= 5$.  
We take the threshold as $3.8$ and look to estimate the probability that the shortest path exceeds this
threshold, which in our traffic analogy would correspond to not being able to reach a destination
on time.  We used SMC as the base inference with $500$ particles and used batches of $8$ runs
as per the chaos example.
Figure~\ref{fig:exp-plots} shows that IT outperform both SMC with the same number of
samples and SMC with $1000$ times more samples.

\newpage
%\titlespacing\section{0pt}{2pt plus 2pt minus 2pt}{0pt plus 2pt minus 0pt}
%\titlespacing\subsection{0pt}{2pt plus 2pt minus 2pt}{0pt plus 2pt minus 0pt}
%\titlespacing\subsubsection{0pt}{2pt plus 2pt minus 2pt}{0pt plus 2pt minus 0pt}
%\bibliographystyle{unsrt}
\bibliography{refs}		
	
\end{document}